\documentclass[11pt]{article}

\linespread{1.1}\vfuzz2pt \hfuzz2pt

%--------- Packages -----------
\usepackage{amssymb,amsthm,amsmath,amssymb,wrapfig,dsfont,authblk}
\usepackage{thmtools,thm-restate}
\PassOptionsToPackage{hyphens}{url}
\usepackage[square,sort,semicolon,numbers]{natbib}%[round] 
\usepackage{varioref}
\usepackage[usenames,svgnames,xcdraw,table]{xcolor}
\definecolor{DarkBlue}{rgb}{0.1,0.1,0.5}
\definecolor{DarkGreen}{rgb}{0.1,0.5,0.1}
\usepackage[backref=page]{hyperref}
\hypersetup{
     colorlinks   = true,
     linkcolor    = DarkBlue, % color of internal links
     urlcolor     = DarkBlue, % color of external links
	 citecolor    = DarkGreen % color of links to bibliography
}
\renewcommand*{\backref}[1]{}
\renewcommand*{\backrefalt}[4]{%
    \ifcase #1 (Not cited.)%
    \or        (Cited on page~#2)%
    \else      (Cited on pages~#2)%
    \fi}
\usepackage[capitalise,noabbrev]{cleveref}
%http://tex.stackexchange.com/questions/83037/difference-between-ref-varioref-and-cleveref-decision-for-a-thesis
\usepackage[margin=0.9in]{geometry}

%\usepackage{tikz}
%\usetikzlibrary{arrows,positioning,automata,shapes,decorations.pathreplacing}
%\usetikzlibrary{arrows,positioning,automata,shapes,snakes}
%\usepackage[makeroom]{cancel}
%\usepackage{graphics}
%\usepackage{graphicx}
%\usepackage{fullpage}
%\usepackage{scrextend}

\usepackage[linesnumbered,lined,boxed,ruled,vlined]{algorithm2e} % For algorithms

\SetKwInOut{Parameter}{Parameter}
\SetAlFnt{\small}
\SetAlCapFnt{\small}
\SetAlCapNameFnt{\small}
\SetAlCapHSkip{0pt}
\IncMargin{-\parindent}

%\usepackage[linesnumbered,lined,boxed,ruled]{algorithm2e}
%\SetKwInOut{Parameter}{Parameter}
%\SetKwRepeat{Do}{do}{while}
%\SetKwFor{ForEach}{for each}{do}{endfor}

%\SetCommentSty{mycommfont}
%\usepackage{algpseudocode}

\usepackage{tikz}
\usetikzlibrary{fit,calc}
%define a marking command

%define a boxing command, argument = colour of box

%define some colours according to algorithm parts (or any other method you like)
\colorlet{pink}{red!40}
\colorlet{blue}{cyan!60}
\colorlet{mygray}{gray!55}

% The following snippet is from https://tex.stackexchange.com/questions/4891/how-do-i-control-the-spacing-above-a-new-paragraph
% also https://tex.stackexchange.com/questions/153646/algorithm2e-disabling-line-numbers-for-specific-lines
\makeatletter
\renewcommand{\paragraph}{%
  \@startsection{paragraph}{4}%
  {\z@}{1.0ex \@plus 1ex \@minus .2ex}{-1em}%
  {\normalfont\normalsize\bfseries}%
}
\let\oldnl\nl% Store \nl in \oldnl
\newcommand{\nonl}{\renewcommand{\nl}{\let\nl\oldnl}}% Remove line number for one line
\makeatother

\usepackage{nicefrac}
\usepackage{enumerate}
\usepackage[labelfont={normalfont,bf},textfont=it]{caption}
\usepackage{subcaption}
\usepackage[T1]{fontenc}
\usepackage{mathtools}
\usepackage{bbm}

%----------Spacing and fonts-------------
%\usepackage[euler-digits,small]{eulervm} % Euler for math and numbers
%\usepackage{charter}
%\usepackage[verbose,expansion=alltext,stretch=50]{microtype}
\usepackage{tabulary}
\usepackage{multirow}
\usepackage{booktabs}
\usepackage{colortbl}

%\setlength{\parindent}{0pt}
%\setlength{\parskip}{9pt}
%\usepackage[math]{cellspace}
%\cellspacetoplimit5pt
%\cellspacebottomlimit5pt
%\newcommand{\firstpartial}[1]{\dfrac{\partial}{\partial {#1}}}
%\renewcommand{\figurename}{Figure}
%
%\usepackage{marvosym}
%% \usepackage{wasysym}
%\usepackage{bbm}
%\usepackage{stackrel}

%---------------To-do Notes---------------
\usepackage[backgroundcolor=blue!20!white]{todonotes}%[backgroundcolor=Moccasin,colorinlistoftodos]

% to disable to do notes
%\renewcommand{\todo}[1]{}

%---------------Shortcuts----------------

\newtheorem{theorem}{Theorem}[section]

\newtheorem{lemma}[theorem]{Lemma}
\newtheorem{claim}[theorem]{Claim}
\newtheorem{corollary}[theorem]{Corollary}
\newtheorem{proposition}[theorem]{Proposition}

\newtheorem{definition}{Definition}[section]
\theoremstyle{definition}
\newtheorem{remark}[definition]{Remark}

\allowdisplaybreaks

\newcommand{\Alg}{\textrm{\textsc{Alg}}}

\newcommand{\e}{\mathbf e}
\newcommand{\s}{\mathbf s}
\newcommand{\EF}{\mathrm{EF}}
\newcommand{\EFone}{\textrm{EF1}}

\newcommand{\fPO}{\textrm{fPO}}

\newcommand{\I}{{\mathcal I}}

\newcommand{\MBB}{\mathrm{MBB}}

\newcommand{\M}{{\mathcal M}}

\newcommand{\p}{\mathbf p}

\newcommand{\PO}{\textrm{PO}}
\newcommand{\poly}{\mathrm{poly}}

\newcommand{\V}{\mathcal V}

\newcommand{\x}{\mathbf x}

\newcommand{\y}{\mathbf y}

%\everymath{\displaystyle}
%----------------------------------------------------------

%----------------------------------------------------------
\newcommand{\alloc}{\mathbf a}
\newcommand{\Prop}{\textrm{Prop}}
\newcommand{\Propone}{\textsc{Prop1}}
\newcommand{\EFtwo}{\textsc{EF}_1^1}
%-------------------------------------------------------------
%-------------------------------------------------------------

\begin{document}
\title{\bfseries On the Proximity of Markets with Integral Equilibria}

\author{Siddharth Barman\thanks{Indian Institute of Science. \texttt{barman@iisc.ac.in} \\ \hspace*{13pt} Supported by a Ramanujan Fellowship (SERB - {SB/S2/RJN-128/2015}) and a Pratiksha Trust Young Investigator Award.} \qquad Sanath Kumar Krishnamurthy\thanks{Stanford University. \texttt{sanathsk@stanford.edu}}}

\date{}
\maketitle

\begin{abstract}
We study Fisher markets that admit equilibria wherein each good is integrally assigned to some agent. While strong existence and computational guarantees are known for equilibria of Fisher markets with additive valuations~\cite{EG59consensus,orlin2010improved}, such equilibria, in general,  assign goods fractionally to agents. Hence, Fisher markets are not directly applicable in the context of indivisible goods. In this work we show that one can always bypass this hurdle and, up to a bounded change in agents' budgets, obtain markets that admit an integral equilibrium. We refer to such markets as pure markets and show that, for any given Fisher market (with additive valuations), one can efficiently compute a ``near-by,'' pure market with an accompanying integral equilibrium. %We complement this algorithmic result by proving  that it is {\rm NP}-hard to determine whether a given Fisher market admits an integral equilibrium.

Our work on pure markets leads to novel algorithmic results for fair division of indivisible goods. Prior work in discrete fair division has shown that, under additive valuations, there always exist allocations that simultaneously achieve the seemingly incompatible properties of fairness and efficiency~\cite{CKM+16unreasonable}; here fairness refers to \emph{envy-freeness up to one good} ($\EFone$) and efficiency corresponds to \emph{Pareto efficiency}. However, polynomial-time algorithms are not known for finding such allocations. Considering relaxations of proportionality and $\EFone$, respectively, as our notions of fairness, we show that fair and Pareto efficient allocations can be computed in strongly polynomial time. %In particular, we consider fairness in terms of  \emph{proportionality up to one good}~\cite{conitzer2017fair} and \emph{envy-freeness up to two goods}. These fair division results highlight the applicability of our work on pure markets.
\end{abstract}

\noindent
\section{Introduction}
Fisher markets are fundamental models of resource allocation in mathematical economics~\cite{BS00compute}. Such markets consist of  a set of divisible goods along with a set of buyers who have prespecified budgets and valuations (over all possible bundles of the goods). In this work we focus on the basic setup wherein the valuations of the buyers are additive. In an equilibrium of a Fisher market, goods are assigned prices, each buyer spends its entire budget selecting only those goods that provide maximum value per unit of money spent, and the market clears. The relevance of market equilibria (specifically from a resource-allocation perspective) is substantiated by the first welfare theorem which asserts that such equilibria are always \emph{Pareto efficient} {\cite[Chapter~16]{MWG+95microeconomic}}.

The convex program of Eisenberg and Gale provides a remarkable characterization (and, in conjunction, a proof of existence) of equilibria in Fisher markets with additive valuations: the primal and dual solutions of their convex program correspond to the equilibrium allocations and  prices, respectively~\cite{EG59consensus,NRT+07AGT}. The seminal work of Arrow and Debreu~\cite{AD54existence} further shows that equilibria exist under more general market models and convex settings; see, e.g., Mas-Colell et al.~\cite{MWG+95microeconomic}. The notable aspect of the Eisenberg-Gale characterization is that---in contrast to the encompassing result of Arrow and Debreu---it provides an efficient method for finding equilibria under additive valuations. Several algorithmic results have been developed recently for computing Fisher market equilibria and, in fact, strongly polynomial-time algorithms are known for the additive case~\cite{orlin2010improved,vegh2012strongly}. 

Along with efficiency, market equilibria provide strong fairness guarantees. A well-known result of Varian~\cite{V74equity} shows that if in a market all the agents have equal budgets, then any market equilibrium---specifically called competitive equilibrium from equal incomes (\textrm{CEEI})---leads to an \emph{envy-free} allocation. Envy freeness is a standard solution concept and it deems an (fractional) allocation of the (divisible) goods to be fair if, under it, each agent prefers its own bundle over that of any other agent~\cite{F67resource}. 

However, Fisher markets do not yield a representative model in the context of indivisible goods. Such goods correspond to discrete resources (that cannot be fractionally assigned) and naturally occur in several allocation problems, e.g., course assignment~\cite{OSB10finding} and inventory pricing~\cite{rotemberg2011fair}. A market equilibrium, in general, requires a fractional assignment of goods to agents. Hence, one cannot simply consider a market with indivisible goods and expect an equilibrium outcome wherein the goods do not have to be fractionally assigned. In other words, the desirable market properties of efficiency, fairness, and computational tractability are somewhat confined to divisible goods. 

Our work shows that one can bypass this hurdle and, up to a bounded change in budgets, always obtain markets that admit \emph{integral equilibria}. Specifically, we will consider markets that admit an equilibrium wherein each good is integrally assigned to some agent. We will refer to such Fisher markets as \emph{pure markets}. Of cou(rse, not all markets are pure.\footnote{Consider a market of a single good and two agents with equal budgets.} Nevertheless, the present paper shows that for every Fisher market (with additive valuations) there exists a ``nearby'' market which admits an integral equilibrium. Specifically, we prove that for any given market $\M$ one can construct---with a bounded change in the budgets---a  pure market $\M'$. Here, both the markets have the same set of agents, goods, and valuations, and the absolute change in any agent's budget is upper bounded by $\| \p\|_\infty$, where $\p$ is the equilibrium price vector of $\M$ (Theorem~\ref{thm:key-lemma} and Theorem \ref{thm:strong-poly-mark}).

Note that pure markets enable us to treat indivisible goods as divisible ones and apply standard (Fisher market) results, such as the first welfare theorem.  The fact that the resulting equilibrium is integral ensures that---independent of the analytic treatment---the final allocation does not require the discrete goods to be fractionally allocated, i.e., it conforms to a legitimate assignment of the given indivisible goods.  \\

\noindent 
{\bf Pure Markets for Discrete Fair Division.} Our work on pure markets leads to novel algorithmic results for discrete fair division. Specifically, we address fair division of indivisible goods among agents with additive valuations. {Note that there are no monetary transfers in this setup, i.e., unlike the market setting, here we do not have budgets or prices.}

Classical notions of fairness---e.g., envy-freeness and \emph{proportionality}\footnote{A division among $n$ agents is said to be \emph{proportionally fair} iff each agent gets a bundle of value at least $1/n$ times her value for the grand bundle of goods.}---typically address allocation of divisible goods and are not directly applicable in the discrete setting. For instance, while an envy-free and proportional allocation of divisible goods always exists~\cite{S80cut}, such an existential result does not hold when the goods are indivisible.\footnote{If a single indivisible good has to be allocated between two agents, then, under any allocation, the losing agent will be envious and will not achieve proportionality.}

To address this issue, in recent years cogent analogues of envy-freeness and proportionality have been proposed for addressing the discrete version of the fair-division problem. A well-studied solution concept in this line of work is \emph{envy-freeness up to one good}~\cite{B11combinatorial}: an (integral) allocation is said to be {envy-free up to one good} ($\EFone$) iff each agent prefers its own bundle over the bundle of any other agent up to the removal of one good. Along the lines of $\EFone$, a surrogate of proportionality---called \emph{proportionality up to one good}---has also been considered in prior work~\cite{conitzer2017fair}. In particular, an allocation is said to be proportional up to one good ($\Propone$) iff each agent receives its proportional share after the inclusion of one extra good in its bundle.\footnote{In a fair-division problem with $n$ agents, the proportional share of an agent $i$ is defined to the $1/n$ times the value that $i$ has for the entire set of goods.} 

The work of Lipton et al.~\cite{LMM+04approximately} shows that as long as the valuations of the agents are monotone an $\EFone$ allocation can be computed efficiently. This result is notably general, since it guarantees the existence of $\EFone$ allocations under arbitrary, combinatorial (monotone) valuations. Caragiannis et al.~\cite{CKM+16unreasonable} established another attractive feature of this solution concept: under additive valuations, there always exists an allocation which is both $\EFone$ and Pareto optimal ($\PO$). Though, polynomial-time algorithms are not known for finding such a fair and efficient allocation--the work of Barman et al.~\cite{barman2018finding} provides a pseudopolynomial time algorithm for this problem.  

Under additive valuations, an $\EFone$ allocations is also $\Propone$.  Hence, in the additive-valuations context, the result of Lipton et al.~\cite{LMM+04approximately} is also applicable to $\Propone$. Similarly, via the existence result of Caragiannis et al.~\cite{CKM+16unreasonable}, we get that if the agents' valuations are additive, then there exists an allocation that is both $\Propone$ and $\PO$.  

We will show that---in contrast to the known pseudopolynomial result for finding $\EFone$ and $\PO$ allocations~\cite{barman2018finding}---one can compute allocations that are $\Propone$ and $\PO$ in strongly polynomial time (Corollary~\ref{cor:fair-division-add-goods}). This result highlights the applicability of our work on pure markets.  

We also consider another, natural relaxation of $\EFone$, which we refer to as $\EFtwo$: this solution concept requires that any agent $i$ is not envious of any other agent $k$, up to the inclusion of one good in $i$'s bundle and the removal of one good from $k$'s bundle. We develop an efficient algorithm for computing allocations of indivisible goods that are simultaneously $\EFtwo$ and $\PO$ (Corollary~\ref{cor:ef2-division-add-goods}). 

It is relevant to note that the work of Barman et al.~\cite{barman2018finding} can also be considered as one that finds pure markets with limited change in budgets. However, in this sense, the result obtained in~\cite{barman2018finding} is not stronger than the one established in the present paper. That result does provide a stronger fairness guarantee ($\EFone$ and $\PO$ in pseudopolynomial time), but one can show that the algorithm developed in~\cite{barman2018finding} can lead to larger (than the ones obtained in the present paper) perturbations in the budgets; see Appendix~\ref{appendix:example} for a specific market instance in which the current algorithm outperforms (in terms of budget perturbations) the one developed in~\cite{barman2018finding}. Overall, the pure-market existence result obtained in this work is not weaker than the one obtained in \cite{barman2018finding}. Also, in contrast to that work, the present algorithm runs in strongly polynomial time and is able to address unequal budgets. \\

\noindent
{\bf Our Techniques:}
We establish the result for pure markets via a constructive proof. In particular, we develop an efficient algorithm that starts with an equilibrium of the given market and rounds its (fractional) allocation to obtain an integral one. In particular, our algorithm integrally assigns all the goods, which to begin were fractionally assigned. The algorithm does not alter the  prices of the goods. We obtain a pure market at the end by setting the new budgets to explicitly satisfy the budget-exhaustion condition with respect to the computed allocation and the unchanged prices.  While the algorithm is quite direct, the sequence in which it allocates the goods is fairly relevant.  A careful curation ensures that the new budgets are close to the given ones. Notably, in our empirical study (Section~\ref{section:empirical}), it takes less time to execute this rounding than to compute an equilibrium of the given Fisher market. 

In Section~\ref{section:fair-division} we show that the integral allocation we obtain (via rounding) satisfies notable fairness and efficiency guarantees. Given that fair-division methods are widely used in practice,\footnote{See, e.g., Spliddit~\cite{GP15spliddit}: {\url{http://www.spliddit.org/}}} efficient and easy-to-implement algorithms---such as the ones developed in this work---have a potential for direct impact.  \\

\noindent
{\bf Additional Related Work:} An interesting work of Babaioff et al.~\cite{babaioff2017competitive} considers markets wherein the indivisibility of goods is explicitly enforced. In particular, in their framework each agent selects its most preferred subset of goods, among all subsets that satisfy the budget constraint. Hence, fractional selection/allocations are ruled out in this setup. For such integral markets, existence of equilibria is not guaranteed. By contrast, we solely focus on pure/fractional markets, wherein equilibria necessarily exist. The key distinction here is that a pure market is a fractional market that happens to admit an integral equilibria. While a pure market is integral in the sense of Babaioff et al.~\cite{babaioff2017competitive}, the indivisibility of goods is not explicitly enforced in this framework. 

Babaioff et al.~\cite{babaioff2017competitive} characterize the existence of equilibria in integral markets with two agents, at most five goods, and generic budgets. On the other hand, this paper establishes that, in the space of Fisher markets, pure (and, hence, integral) markets are dense, up to bounded perturbations in the budgets.

\section{Notation and Preliminaries}
\label{sec:Preliminaries}
%\subsection{Market Terminology}

%The Fisher market is a fundamental construct of economics that is used to model resource-allocation settings~\cite{BS00compute}. Such markets consist of a set of buyers with specified budgets/endowments and a set of divisible goods. Subject to budget constraints, the agents are interested in buying goods that provide maximum utility per unit of money spent. Formally a market is 

Fisher market is a tuple $\M := \langle [n],[m],\V, \e \rangle$ wherein $[n] = \{1,2,\ldots,n\}$ denotes the set of agents, $[m]=\{1,2,\ldots,m\}$ denotes the set of goods, $\V = \{v_1,v_2,\ldots,v_n\}$ denotes the valuation profile, and $\e=(e_1,e_2,\ldots,e_n)$ denotes the budget vector. The valuation profile $\V$ specifies the cardinal preferences of each agent $i \in [n]$ over the set of goods $[m]$ via a valuation function $v_i: {[0,1]}^{m} \mapsto \mathbb{R}_{\geq 0}$. For any agent $i \in [n]$, the parameter $e_i\in\mathbb{R}_+$ represents agent $i$'s budget/endowment. 

A \emph{bundle} of {goods} is a vector $\s=(s_1,s_2,\ldots,s_m) \in[0,1]^m$ in which  $s_j$ represents the allocated quantity of the good $j$. In particular, the value that an agent $i \in [n]$ has for a bundle $\s \in [0,1]^m$ is denoted as $v_i(\s)$. A bundle $\s$ is said to be integral if under it each good is allocated integrally, i.e., for each $j \in [m]$ we have $s_j \in \{0,1\}$. Note that an integral bundle $\s$ corresponds to the subset of goods $\{j \in [m] \mid s_j=1\}$. If $\s$ is an integral bundle, we will overload notation and let $\s$ also denote the corresponding subset of goods, i.e., $\s:=\{j \in [m] \mid s_j=1\}$. 

Throughout, we will assume that agents have nonnegative and additive valuations, i.e., for each agent $i \in [n]$ and any bundle $\s$, we have $v_i(\s):=\sum_{j\in[m]} v_{i,j}s_{j}$, where $v_{i,j} \geq 0 $ denotes the value agent~$i$ has for good~$j$. \\

\noindent
\textbf{Allocation:} An allocation $\x \in {[0,1]}^{n \times m}$ refers to a collection of $n$ bundles $(\x_1,\x_2,\ldots,\x_n)$ where $\x_i=(x_{i,1},x_{i,2},\dots,x_{i,m})\in[0,1]^m$ is the bundle allocated to agent $i \in [n]$. Furthermore, in an allocation at most one unit of each good is allocated, i.e., 
for all $j \in [m]$, we have $\sum_{i \in [n]} x_{i,j} \leq 1$. In other words, an allocation corresponds to a fractional allocation of the goods among the agents. We will say that an allocation $\x$ is integral iff its constituent bundles are integral, $\x \in {\{0,1\}}^{n\times m}$. \\

\noindent 
\textbf{Market outcome and equilibrium:} For a Fisher market $\M=\langle [n], [m], \V, \e \rangle$, a \emph{market outcome} is tuple $( \x, \p )$ where $\x \in {[0,1]}^{n \times m}$ corresponds to an allocation and the price vector $\p =(p_1, p_2, \ldots, p_m)$ associates a price $p_g \in \mathbb{R}_{\geq 0}$ with each good $g \in [m]$. 

Given a price vector $\p$, write $\MBB_i$ to denote the set of goods that provide agent $i$ the maximum possible utility per unit of money spent, $\MBB_i : = \{ g \in [m] \mid v_{i,g} / p_g \geq v_{i, j}/p_j \text{ for all } j \in [m] \}$. $\MBB_i$ is called the maximum bang-per-buck set of agent $i$ (under the price vector $\p$) and, for ease of presentation, we will denote the maximum bang-per-buck ratio by $\MBB_i$ as well, i.e., $\MBB_i := \max_{j \in [m]} v_{i,j}/p_j$.

An outcome $( \x, \p )$ is said to an \emph{equilibrium} of a Fisher market $\M=\langle [n], [m], \V, \e \rangle$ iff it satisfies the following conditions: 
\begin{itemize}
	\item \emph{Market clearing:} each good $g \in [m]$ is either priced at zero, $p_g = 0$, or it is completely allocated, $\sum_{i=1}^n x_{i,g} =1$.  
	\item \emph{Budget exhaustion:} Agents spend their entire budget, i.e., for all $i\in[n]$, the following equality holds $\sum_{g\in[m]} x_{i,g}p_g=\x_i\cdot\p=e_i$.
	\item \emph{Maximum bang-per-buck allocation:}  Each agent $i \in [n]$ spends its budget only on optimal goods, i.e., if $x_{i,g} > 0$ for good $g\in [m]$, then $g \in \MBB_i$.
\end{itemize}

We will explicitly use the term \emph{integral equilibrium} to refer to a market equilibrium $( \x, \p )$ in which the allocation $\x$ is integral.

Recall that equilibria of markets (with additive valuations) correspond to optimal solutions of the Eisenberg-Gale convex program~\cite{EG59consensus,NRT+07AGT}. Furthermore, in the additive case, strongly polynomial-time algorithms exist for finding market equilibria~\cite{orlin2010improved,vegh2012strongly}. 

%Equilibria of markets (with linear utilities) correspond to optimal solutions of the classic Eisenberg-Gale convex program~\cite{EG59consensus,NRT+07AGT}.\footnote{Specifically, the primal and dual solutions capture the equilibrium allocation and prices, respectively. This result is established by essentially considering the KKT conditions of the Eisenberg-Gale convex program.} This remarkable characterization provides a proof of existence as well as an efficient method for computing Fisher market equilibria~\cite{DPS+08market}. In fact, strongly polynomial-time algorithms exist for finding market equilibria~\cite{orlin2010improved}. 

%The relevance of market equilibria---in particular, from a resource-allocation perspective---is substantiated by the first welfare theorem which asserts that such equilibria are always \emph{Pareto efficient}. 

The first welfare theorem ensures that equilibrium allocations are \emph{Pareto efficient}, i.e., satisfy a standard measure of economic efficiency. Specifically, for an instance $\langle [n], [m], \V \rangle$, an allocation $\x \in {[0,1]}^{n \times m}$ is said to be Pareto dominated by another allocation $\y \in {[0,1]}^{n \times m}$ if $v_i(\y_i) \geq v_i(\x_i)$, for each agent $i \in [n]$, and  $v_k(\y_k) > v_k(\x_k)$ for some agent $k \in [n]$. That is, compared to allocation $\x$, every agent is better off under $\y$  and at least one agent is strictly better off. An allocation is said to be \emph{Pareto efficient} or \emph{Pareto optimal} (\PO{}) if it is not Pareto dominated by any other allocation. 
%We will use the term \emph{fractionally Pareto efficient} (\fPO{}) to emphasize that fact an allocation is not Pareto dominated by any fractional allocation $\y \in {[0,1]}^{n \times m}$. 

\begin{definition}[Fractionally Pareto Efficient Allocation]
An allocation is said to be \emph{fractionally Pareto efficient} (\fPO{}) iff it is not Pareto dominated by any fractional allocation $\y \in {[0,1]}^{n \times m}$. 
\end{definition}
Note that an integral allocation $\x \in \{0,1\}^{n \times m} $ can be \fPO{}.

\begin{proposition}[First Welfare Theorem; Mas-Colell et al. ({\cite[Chapter~16]{MWG+95microeconomic}})]
	\label{prop:FirstWelfareTheorem}
	If $( \x, \p )$ is an equilibrium of a Fisher market with additive valuations, then the equilibrium allocation $\x$ is fractionally Pareto efficient ($\fPO$).
\end{proposition}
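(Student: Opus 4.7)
The plan is to prove the statement by contradiction in the standard way, using the three equilibrium conditions (market clearing, budget exhaustion, and MBB spending) to derive a strict inequality on total spending that is inconsistent with the supply constraint $\sum_i y_{i,g} \le 1$. Suppose, toward a contradiction, that some fractional allocation $\y \in [0,1]^{n\times m}$ Pareto dominates $\x$, i.e., $v_i(\y_i) \ge v_i(\x_i)$ for every agent $i$ and $v_k(\y_k) > v_k(\x_k)$ for at least one agent $k$.

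The first step is a ``bang-per-buck'' accounting of each agent's equilibrium utility. Since the MBB condition forces $v_{i,g}/p_g = \MBB_i$ for every $g$ with $x_{i,g}>0$, additivity plus budget exhaustion gives $v_i(\x_i) = \MBB_i \cdot (\x_i\cdot \p) = \MBB_i \, e_i$ (here I will need to be a bit careful with goods priced at $p_g=0$ and with the degenerate case $\MBB_i=0$; in the latter case every bundle is tied for agent $i$ and the argument goes through trivially since such an agent cannot be the strictly-better-off agent $k$). On the other hand, by additivity and the definition of the MBB ratio, for \emph{any} bundle $\y_i$ we have the upper bound $v_i(\y_i) = \sum_{g} v_{i,g}\, y_{i,g} \le \MBB_i \sum_{g} p_g\, y_{i,g} = \MBB_i \, (\y_i \cdot \p)$.

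Combining these two observations with the Pareto-dominance hypothesis, I get $\y_i \cdot \p \ge e_i$ for every agent $i$, and $\y_k \cdot \p > e_k$ for the strictly-better-off agent $k$ (using $\MBB_k > 0$, which holds because $v_k(\y_k) > v_k(\x_k) \ge 0$ forces some positively valued and positively priced good to be in $\y_k$). Summing over all agents yields
\begin{equation*}
\sum_{i\in[n]} \y_i \cdot \p \;>\; \sum_{i\in[n]} e_i.
\end{equation*}
Budget exhaustion lets me rewrite the right-hand side as $\sum_i \x_i \cdot \p$, and market clearing says this equals $\sum_{g\in[m]} p_g$ (goods priced at zero contribute nothing, and goods with $p_g>0$ are fully allocated).

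The final step is to contradict this with the supply constraint on $\y$: since $\sum_i y_{i,g} \le 1$ for every $g$, we have $\sum_i \y_i\cdot \p = \sum_g p_g \sum_i y_{i,g} \le \sum_g p_g$. Chaining the two displays gives $\sum_g p_g > \sum_g p_g$, the desired contradiction. The only part that requires any care is the degenerate bookkeeping around $p_g=0$ and $\MBB_i=0$, but these cases do not affect the strict inequality because the strictly-better-off agent $k$ must receive positive value from some positively-priced good.
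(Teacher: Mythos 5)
The paper does not actually prove this proposition---it is imported verbatim from Mas-Colell et al.\ with a citation---so there is nothing to compare your argument against line by line. Your proof is the standard one for the first welfare theorem in Fisher markets, and its core is correct: from the MBB and budget-exhaustion conditions you get $v_i(\x_i)=\MBB_i\, e_i$, from additivity you get $v_i(\y_i)\le \MBB_i(\y_i\cdot\p)$ for any bundle, Pareto dominance then forces $\y_i\cdot\p\ge e_i$ for all $i$ with strict inequality for $k$, and summing contradicts $\sum_i \y_i\cdot\p\le\sum_g p_g=\sum_i e_i$. This is exactly the right route, and the observation that $\MBB_k>0$ (so the strict inequality survives division) is the right thing to check for agent $k$.

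One edge case is handled incorrectly, though. You dismiss the case $\MBB_i=0$ by saying ``such an agent cannot be the strictly-better-off agent $k$, so the argument goes through trivially.'' That is not the issue: the problem is that for an agent with $\MBB_i=0$ you cannot derive $\y_i\cdot\p\ge e_i$ at all (the inequality $\MBB_i(\y_i\cdot\p)\ge \MBB_i e_i$ is $0\ge 0$ and says nothing), so the summation step loses the very terms it needs. And this is not merely a bookkeeping nuisance---under the paper's literal definitions the proposition can actually \emph{fail} in this degenerate case. Take two agents with budgets $e_1=e_2=1$ and a single good $A$ with $v_{1,A}=0$ and $v_{2,A}=1$. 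Then $p_A=2$ with $x_{1,A}=x_{2,A}=1/2$ satisfies market clearing, budget exhaustion, and MBB (every good is MBB for agent $1$), yet giving all of $A$ to agent $2$ Pareto dominates this allocation. So the correct fix is not a case analysis but an explicit (standard, and implicitly assumed throughout the paper) hypothesis that every agent has strictly positive value for at least one good; combined with budget exhaustion and $e_i>0$ this forces $0<\MBB_i<\infty$ for every agent, which also disposes of your $p_g=0$ worry, since any good with $v_{i,g}>0$ must then have $p_g>0$. With that hypothesis made explicit, your proof is complete.
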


Along with efficiency, market equilibria are known to fair. In particular, if in a market all the agents have equal endowments, then any market equilibrium---specifically called competitive equilibrium from equal incomes (\textrm{CEEI})---leads to an \emph{envy-free} allocation~\cite{V74equity}. Envy freeness is a standard solution concept and it deems an allocation $\x$ to be fair if, under it, each agent prefers its own bundle over that of any other agent: $v_i(\x_i) \geq v_i(\x_k)$ for all $i, k \in [n]$~\cite{F67resource}. Hence, using Proposition~\ref{prop:FirstWelfareTheorem} and the result of Varian~\cite{V74equity}, we get that \textrm{CEEI} are both fair and efficient. 

However, as observed earlier, equilibrium allocations are not guaranteed to be integral. That is, with indivisible goods, one can not directly apply the market framework and hope to retain the desirable properties of efficiency, fairness, computational tractability, or even universal existence.  

Our work shows that interestingly, up to a bounded change in the endowments, one can always bypass this hurdle and obtain integral equilibria. Towards this end, the following notion will be useful. 

\begin{definition}[Pure Market]
	A Fisher market is said to be pure iff it admits an integral equilibrium. 
\end{definition}

As mentioned previously, pure markets enable us to treat indivisible goods as divisible ones and apply standard (Fisher market) results, such as the first welfare theorem.  The fact that the resulting equilibrium is integral ensures that---independent of the analytic treatment---the final allocation does not require the discrete goods to be fractionally allocated, i.e., it conforms to a legitimate assignment of the given indivisible goods.  \\

\noindent
{\bf Spending Graph:} We will use the construct of a spending graph to state and analyze our algorithm. Given a market $\M =\langle [n], [m], \V, \e \rangle$ along with an outcome $( \x, \p )$, the \emph{spending graph} $G(\x,\p)$ is a weighted bipartite graph whose (bipartition) vertex sets correspond to the set of agents $[n]$ and the set of goods $[m]$, respectively. In the spending graph, we have an edge $(i,j)$ between agent~$i$ and good~$j$ if and only if $x_{i,j}>0$. The weight of any edge $(i,j)$ in $G(\x, \p)$ is the amount that agent~$i$ is spending on good~$j$, i.e., weight of edge $(i,j)$ is $x_{i,j}p_j$. 

Given a Fisher market $\M$ and an equilibrium $(\x,\p)$, it is always possible to rearrange the spending so that the spending graph is a forest, i.e., we can, in strongly polynomial time, find an $\x'$ such that $(\x',\p)$ is an equilibrium of $\M$ and $G(\x',\p)$ is a forest. This fact has been used in computing market equilibrium for markets~\cite{orlin2010improved} and for approximating the \emph{Nash social welfare} objective~\cite{cole2015approximating}. For completeness, we provide a proof of this result in Appendix~\ref{app:spending-forest}.

\begin{claim}\label{claim:spending-forest}
	Given a Fisher market $\M$ and its equilibrium $(\x,\p)$,  we can find---in strongly polynomial time---an (fractional) allocation $\x'$ such that $(\x',\p)$ is also an equilibrium of $\M$ and $G(\x',\p)$ is a forest.
\end{claim}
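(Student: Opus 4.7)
The plan is to start from the given equilibrium $(\x,\p)$ and iteratively eliminate cycles from $G(\x,\p)$, modifying only the allocation while keeping the price vector $\p$ fixed throughout. Since the sets $\MBB_i$ depend only on $\p$, holding $\p$ fixed automatically preserves the maximum-bang-per-buck condition as long as the perturbed allocation keeps support within the original edges of $G(\x,\p)$. Each iteration will strictly decrease the edge count, and since the initial spending graph has at most $nm$ edges, $O(nm)$ rounds suffice. Cycles can be located by DFS in linear time, so the overall procedure is strongly polynomial.

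For one iteration, suppose $G(\x,\p)$ contains a cycle. Since $G(\x,\p)$ is bipartite, the cycle has even length $2k$, consisting of edges $(i_1,g_1),(i_2,g_1),(i_2,g_2),(i_3,g_2),\ldots,(i_k,g_k),(i_1,g_k)$. I would perturb the allocation along this cycle by adding $(p_{g_1}/p_{g_j})\,\delta$ to each edge $(i_j,g_j)$ and subtracting the same amount from each edge $(i_{j+1},g_j)$ (with indices mod $k$), where $\delta$ is a single free scalar. The alternating signs at each good $g_j$ preserve market clearing ($\sum_i x_{i,g_j}$ is unchanged), and the scaling factor $p_{g_1}/p_{g_j}$ is designed precisely so that every agent's total spending is invariant, preserving budget exhaustion. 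Consistency of the two perturbations meeting at $i_1$, where the cycle closes, reduces to a telescoping identity among price ratios and is automatic.

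Next, I would choose either sign of $\delta$ and take $|\delta|$ as large as possible subject to every modified $x_{i,g}$ remaining in $[0,1]$. Market clearing already preserves $\sum_i x_{i,g_j}=1$, so the upper bound cannot bind; only the nonnegativity constraint $x_{i,g}\ge 0$ can. Since every cycle edge has strictly positive weight to begin with, a nontrivial $|\delta|>0$ is always feasible, and at the critical $|\delta|$ at least one edge weight drops to zero and can be removed from the spending graph, yielding a new equilibrium with a strictly smaller edge set.

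The main obstacle is pure bookkeeping: verifying that the linear system defined by market clearing at each good and budget exhaustion at each agent admits a one-parameter family of feasible perturbations around the cycle. This reduces to the telescoping $(p_{g_1}/p_{g_2})(p_{g_2}/p_{g_3})\cdots(p_{g_k}/p_{g_1})=1$, which is trivially true, so the system is always consistent. Beyond this check, the remaining steps are mechanical, and the termination and complexity bounds follow immediately.
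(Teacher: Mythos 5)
Your proposal is correct and follows essentially the same route as the paper: alternately shift spending around an even cycle of the bipartite spending graph (your scaling $p_{g_1}/p_{g_j}$ on allocation quantities is exactly the paper's constant shift $w$ of spending weights $x_{i,j}p_j$), choose the step size that zeroes out a minimum-weight edge, and repeat at most $nm$ times while prices, budgets, market clearing, and the MBB condition are all preserved.
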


\section{On the Proximity of Pure Markets}
The main result of this section shows that for every Fisher market there always exists a ``nearby'' market which is pure. Our proof of this result is constructive. In particular, we develop a strongly  polynomial-time algorithm (\Alg) that, for any given market $\M=\langle [n], [m], \V, \e \rangle$ and its equilibrium $(\x, \p)$, finds a pure market $\M'=\langle [n], [m], \V, \e' \rangle$ such that the absolute perturbation in endowments is at most $\| \p\|_\infty$, i.e., $\| \e - \e' \|_\infty \leq \| \p \|_\infty$. $\Alg$ also computes an integral equilibrium $(\x', \p)$ of $\M'$.%, thereby certifying that it is indeed pure. 

%Note finding such an equilibrium is essential, since---in contrast to computing an arbitrary equilibrium---finding an integral equilibrium is computationally hard, i.e., determining if a given market is pure is an {\rm NP}-hard problem (Theorem~\ref{theorem:pure-hardness}). The application of this result to fair division also requires such an integral equilibrium.  

%Given a set of agents $[n]$, a set of items $[m]$, an additive valuation profile $\V=\{v_1,v_2,\dots,v_n\}$, a budget vector $\e=(e_1,e_2,\dots,e_n)$, and a market outcome $(\x,\p)$ for the corresponding fractional market. In strongly polynomial time, we will find a near-by budget vector $\e'$ such that there exists a market outcome for the fractionally integral market with this near-by budget vector $\e'$. In fact, in strongly polynomial time, we also compute an integral allocation $\x'$ such that $(\x',\p)$ is a market outcome of this fractionally integral market.

\begin{theorem}[Main Result]\label{thm:key-lemma}
	Given a Fisher market $\M=\langle [n],[m],\V, \e \rangle$ with additive valuations and its equilibrium $(\x,\p)$, we can find---in strongly polynomial time---a budget vector $\e'$ and an integral allocation $\x'$ such that
	\begin{itemize}
		\item $(\x',\p)$ is an integral equilibrium of the market $\M'=\langle [n],[m],\V, \e' \rangle$.
		\item The budget vector $\e'$ is close to $\e$: $\|\e'-\e\|_{\infty}\leq \|\p\|_{\infty}$. In addition, $\sum_{i=1}^n e'_i = \sum_{i=1}^n e_i$.
	\end{itemize}
\end{theorem}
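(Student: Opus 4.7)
The plan is to keep the equilibrium price vector $\p$ fixed and round the fractional allocation $\x$ into an integral allocation $\x'$ whose support remains inside the edges of the spending graph $G(\x,\p)$; then define $e'_i:=\x'_i\cdot\p$. Under this construction, three of the four equilibrium conditions for $\M'=\langle[n],[m],\V,\e'\rangle$ are immediate. Budget exhaustion is built into the definition of $e'_i$. Market clearing holds because $\x'$ will fully allocate every priced good. The maximum bang-per-buck condition is inherited from $(\x,\p)$: every edge in the support of $\x'$ is already an edge of $G(\x,\p)$, so whenever $x'_{i,g}>0$ the good $g$ lies in agent $i$'s MBB set at prices $\p$. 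The total budget is preserved, since $\sum_i e'_i = \sum_g p_g \sum_i x'_{i,g} = \sum_g p_g = \sum_i e_i$. Thus the entire content of the theorem reduces to building an integral $\x'$ supported on $G(\x,\p)$ with $|e'_i-e_i|=|(\x'_i-\x_i)\cdot\p|\le\|\p\|_\infty$ for every agent~$i$.

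First I would invoke Claim~\ref{claim:spending-forest} to assume without loss of generality that $G(\x,\p)$ is already a forest, and then operate tree by tree. Any good $g$ with $x_{i,g}=1$ must be a degree-one leaf of its tree (by market clearing $\sum_i x_{i,g}=1$), and can be kept with its unique neighbor $i$. Every remaining ``interior'' good is split fractionally among $\ge 2$ agents and must be oriented to exactly one neighboring agent. Rewriting the change as
\begin{equation*}
   e'_i - e_i \;=\; \sum_{g\,\text{oriented to}\,i} p_g \;-\; \sum_{g\sim i\,\text{in}\,G(\x,\p)} x_{i,g}\, p_g,
\end{equation*}
the task becomes a tree-rounding problem: orient each interior good to one of its neighbors so that, for every agent, the total price of goods oriented to her is within $\|\p\|_\infty$ of her original fractional spending.

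The algorithm I expect to use repeatedly finalizes a leaf agent of the current (residual) tree. When $i$ is a leaf with unique remaining incident good $g_p$, I decide whether to orient $g_p$ to $i$ (gaining her $(1-x_{i,g_p})p_{g_p}$ and simultaneously removing $g_p$) or to orient $g_p$ elsewhere (costing her exactly $x_{i,g_p}p_{g_p}$, after which $i$ is removed and the residual tree inherits $g_p$ with $i$'s share absorbed by $g_p$'s remaining neighbors). Either decision perturbs $i$'s endowment by at most $p_{g_p}\le\|\p\|_\infty$, and once finalized $i$ is never touched again, so her bound is locked in. A simple local criterion (e.g.\ orient $g_p$ toward the neighbor with the larger outstanding fractional share) selects between the two options so that the procedure always makes progress and agrees with the tree structure.

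The main obstacle is showing that this iterative leaf-rounding really does terminate with every agent within $\|\p\|_\infty$ of her original endowment, rather than accumulating errors along long root-to-leaf paths. The crux of the analysis is an invariant maintained throughout the process: at every stage the residual fractional shares still sum to one on each surviving good, so the ``unabsorbed'' mass from eliminated leaves is always accounted for and cannot drift. Combined with the fact that each agent is finalized exactly once (at a moment when she touches only a single remaining good), this invariant forces $|e'_i-e_i|\le\|\p\|_\infty$ for every $i$, and conservation $\sum_i (e'_i-e_i)=0$ then follows automatically from $\sum_i e'_i=\sum_g p_g=\sum_i e_i$. Strongly polynomial runtime is routine: Claim~\ref{claim:spending-forest} is strongly polynomial, and the rounding performs one constant-work decision per edge of the forest, of which there are $O(n+m)$.
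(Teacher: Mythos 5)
Your reduction of the theorem to a tree-rounding problem is exactly right, and matches the paper: fix $\p$, round $\x$ on the spending forest, define $e'_i := \x'_i\cdot\p$, and observe that market clearing, MBB, budget exhaustion, and $\sum_i e'_i=\sum_i e_i$ all come for free, so the whole content is the per-agent bound $|e'_i-e_i|\le\|\p\|_\infty$. But your rounding procedure and its analysis have a genuine gap precisely at that last step. Your rule only ever makes a decision at a \emph{leaf} agent, one good at a time, and each such decision also silently perturbs the \emph{interior} agents adjacent to the decided good; those perturbations accumulate. Concretely, take one agent $k$ adjacent to $t$ goods $g_1,\dots,g_t$, each priced $1$, with $x_{k,g_j}=x_{i_j,g_j}=1/2$ where $i_j$ is a leaf agent valuing only $g_j$ (this is a legitimate equilibrium with $e_k=t/2$, $e_{i_j}=1/2$, $\|\p\|_\infty=1$). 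Every $i_j$ is a leaf whose unique good is $g_j$; your tie-breaking rule (``orient toward the neighbor with the larger outstanding share,'' or any perturbation of it with $x_{i_j,g_j}$ slightly above $1/2$) sends every $g_j$ to $i_j$, leaving $e'_k=0$ and $|e'_k-e_k|=t/2\gg 1$. The invariant you propose---residual shares sum to one on each surviving good---is true but does not constrain how many times a single interior agent can be on the losing side of a decision, so it cannot deliver the per-agent bound. (The redistribution of an eliminated leaf's share onto the good's remaining neighbors adds a further complication: the final bound must be measured against the original $x_{i,g}$, not the drifting residual shares.)

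The missing idea, and what the paper's $\Alg$ does instead, is to make the decisions \emph{top-down and batched per agent} rather than bottom-up and per good. Root each tree at an agent; then process an agent $i$ only when it has become a root, at which point \emph{all} of $i$'s child goods are assigned in a single greedy pass capped by $i$'s original budget ($i$ keeps child goods while $\p(\x'_i\cup\{g\})\le e_i$, and each rejected child good is handed to one of its own children, i.e., a grandchild agent). The point of this structure is that every agent can receive contested goods from only two sources---at most one inherited parent good, and its own children via the budget-capped greedy---so the deviation $e'_i-e_i$ is bounded above by the price of the single parent good and below by the price of the single first-rejected child good (or the parent good, if all children fit). In the star example this gives $k$ roughly $e_k$ worth of goods and hands each remaining good to one leaf. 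Your upper bound on the runtime and your treatment of the equilibrium conditions are fine; it is the local, leaf-driven decision rule that fails, and replacing it with the rooted, per-agent batching is the substantive step you are missing.
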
 

Note that (in contrast to computing an arbitrary equilibrium) finding an integral equilibrium is computationally hard, i.e., determining whether a given Fisher market is pure is an {\rm NP}-hard problem (Appendix~\ref{appendix:hardness}). Hence, a notable aspect of $\Alg$ is that it, along with a pure market, finds an accompanying integral equilibrium. 

\subsection{Rounding Algorithm}

Recall that, for any given market $\M$ and its equilibrium $(\x, \p)$, we can assume, without loss of generality, that the spending graph $G(\x,\p)$ is a forest. Our algorithm, $\Alg$, constructs a new (integral) allocation $\x'$ by iteratively assigning goods to agents until all the goods are allocated. In $\Alg$, we initialize $G$ to be the spending forest $G(\x,\p)$ and root each tree in $G$ at some agent. Then, we assign child goods to agents $i \in [n]$ with no parents (i.e., to root agents), until adding any more child good to $i$ would violate $i$'s original endowment (i.e., budget constraint) $e_i$. The remaining child goods are then appropriately assigned to grandchildren agents. After each such distribution, we delete this parent agent $i$ and all of its child goods (that have now been alloted). Overall, we repeat this specific method of distributing goods until $G$ is empty. 

{
 % \SetAlFnt{\footnotesize}
%\footnotesize
\begin{algorithm}[ht!]
    \SetKwInOut{Input}{Input}
    \SetKwInOut{Output}{Output}
    \SetAlgoNoLine
    \renewcommand{\thealgocf}{}
    \DontPrintSemicolon
   % \SetAlgorithmName{$\Alg$}{ }{ }
   	\caption{\Alg}
	\label{alg:goods}
	
    \Input{A Fisher market $\M = \langle [n],[m],\V,\e\rangle$ with additive valuations and an equilibrium $(\x,\p)$ of $\M$.} 
    \Output{An integral allocation $\x'$ and a budget vector $\e'$ such that $(\x',\p)$ is an integral equilibrium of the  market $\M'=\langle [n],[m],\V,\e'\rangle$ and $\|\e'-\e\|_{\infty}\leq\|\p\|_{\infty}$}
    
    Set $\x'\leftarrow (\emptyset,\emptyset,\dots,\emptyset)$, i.e., for any agent~$i$ we initialize $\x'_i\leftarrow\emptyset$ \;
	\tcc{We construct $\x'$ by assigning goods to agents until all goods are allocated}
	Initialize $G$ to be the spending forest of $(\x,\p)$, i.e., $G\leftarrow G(\x,\p)$ \;
	\tcc{Whenever we allocate a good, we delete the corresponding vertex from $G$.}
	Root each tree in the forest $G$ at some agent \label{step:root} \;
	Allocate all leaf goods to parent agents \label{step:leaf} \;
	\tcc{That is, for all $j\in[m]$ if $x_{i,j}=1$ then $x'_i\leftarrow x'_i\cup\{j\}$ and delete $j$ from $G$.}
	\While{there is an agent~$i$ with no parent (i.e., $i$ is a root node) in $G$}{
		\While{there is a good~$g$ in the neighborhood of $i$ (i.e., edge $(i,g)$ is in $G$) such that $\p(\x'_i\cup\{g\})\leq e_i$}{
			Allocate $g$ to agent $i$: update $\x'_i\leftarrow \x'_i\cup\{g\}$ and delete $g$ from $G$.
		}
		Allocate every remaining child~$j$ of $i$ to any (agent) child~$k$ of $j$ and delete $j$ from $G$. Here, $i$ and $k$ are agents and $j$ is a good \label{step:inherit}\;
		\tcc{That is, before agent~$i$'s deletion, its grandchildren inherit the remaining child goods of $i$}
		Delete agent~$i$ from $G$.\;
	}
	$\e'\leftarrow (\p(\x'_1),\p(\x'_2),\dots,\p(\x'_n))$
    
  \end{algorithm}
 }
%\textbf{Algorithm outline:} 

The integral allocation $\x'$ we construct is a rounding of the allocation $\x$. In particular, if a good is integrally allocated to agent $i$ under $\x$, then it will continue to be assigned to $i$ in $\x'$. Hence, the focus here is to analyze the assignment of goods which are fractionally allocated (i.e., are not integrally allocated) in $\x$. We will use the term \emph{contested goods} to refer to goods that are fractionally allocated in $\x$. Note that all the goods considered in  the nested while-loops of $\Alg$ are contested. 

%\footnote{Note that a good $j$ can be contested among all the $n$ agents, i.e., we can have $x_{i,j} >0$ for all $i \in [n]$; this will happen if the forest  $G(\x, \p)$ is a star graph.}

\subsection{Proof for Theorem~\ref{thm:key-lemma}}

The runtime analysis of $\Alg$ is direct and leads to following proposition. 

\begin{proposition}\label{claim:polytime}
	$\Alg$ runs in strongly polynomial time. 
\end{proposition}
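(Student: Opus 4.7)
The plan is to bound the number of basic operations performed by $\Alg$ by a polynomial in $n$ and $m$, and then to verify that every numerical quantity manipulated in the process retains bit-length polynomial in the input; together these give strong polynomiality.

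The first step is to invoke Claim~\ref{claim:spending-forest}, which already guarantees that the initial spending forest $G(\x,\p)$ can be produced in strongly polynomial time. Rooting each tree (Step~\ref{step:root}) and routing every leaf good to its parent (Step~\ref{step:leaf}) can then be handled by a single traversal of $G$, contributing an additional $O(n+m)$ operations.

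Next I would bound the two nested while-loops. Since each iteration of the outer loop finishes by deleting a root agent from $G$, the outer loop runs at most $n$ times. Across the entire execution, every good is inserted into some $\x'_i$ (and then deleted from $G$) at most once, either through the inner-loop budget test $\p(\x'_i\cup\{g\})\leq e_i$ or through the grandchild-inheritance step (Step~\ref{step:inherit}). Consequently, the aggregate work of both nested loops is $O(n+m)$, up to routine bookkeeping on the evolving forest. The final assignment $\e'\leftarrow(\p(\x'_1),\dots,\p(\x'_n))$ is also $O(n+m)$.

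What remains is to check numerical strong polynomiality. The only arithmetic the algorithm performs consists of summing the (never-modified) input prices $p_g$ and comparing these partial sums against the (never-modified) input budgets $e_i$. Every such sum is a subset-sum of at most $m$ input values, so its bit-length exceeds the maximum input bit-length by at most $O(\log m)$. Combining the polynomial operation count with this bit-length bound yields the strongly polynomial-time guarantee. I do not expect a genuine obstacle here: the only non-trivial ingredient is the strong polynomiality of extracting a spending forest, which is already supplied by Claim~\ref{claim:spending-forest}.
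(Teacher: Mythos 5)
Your argument is correct and is essentially the analysis the paper has in mind (the paper simply asserts the runtime analysis is ``direct'' and gives no proof): the outer loop deletes an agent per iteration, each good is allocated and deleted at most once, the spending forest is obtained via Claim~\ref{claim:spending-forest}, and the only arithmetic is comparing partial sums of at most $m$ input prices against input budgets, so both the operation count and the bit-lengths are polynomially bounded. No gap.
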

%\begin{proof}
%	The algorithm $\Alg$ initializes $G$ to be a forest with $n+m$ nodes. Finding an agent with no parent in $G$ requires $\O(n+m)$ time. Furthermore, in every iteration of the inner and outer while-loop, we delete at least one vertex of $G$. Since the algorithm terminates when $G$ is empty, we get that the total number of iterations is at most $n+m$ and, hence, the time complexity of $\Alg$ is polynomial. 
%\end{proof}

In Lemma~\ref{lem:market-outcome} we will show that the output of $\Alg$, i.e., $(\x', \p)$, is an equilibrium of market $\M'=\langle [n], [m], \V, \e' \rangle $. Lemma~\ref{lem:near-by-budget} asserts that the computed endowments $\e'$ are close to given budgets $\e$. Together, Lemma~\ref{lem:market-outcome} and Lemma~\ref{lem:near-by-budget} directly imply Theorem~\ref{thm:key-lemma}. 

The following supporting claim shows that $\Alg$ maintains a useful invariant. 
\begin{claim}
	\label{claim:invariant}
	Throughout the execution of $\Alg$, the graph $G$ is a forest. In addition, the root and leaves of every tree in $G$ correspond to agents (i.e., are agent nodes).
\end{claim}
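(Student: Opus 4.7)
The plan is to prove the claim by induction on the iterations of $\Alg$, using the bipartite structure of the spending graph throughout. The base case covers the initialization in steps~2--4, and the inductive step covers a single pass of the outer \textbf{while} loop. Since $\Alg$ only ever deletes vertices from $G$ (never adding edges), the forest property is automatic from a standard fact about graphs; the substantive content is showing that roots and leaves remain agent nodes, and in particular that Step~\ref{step:inherit} is well-defined.

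For the base case, $G(\x,\p)$ is a forest by Claim~\ref{claim:spending-forest}, and rooting each tree at an agent (Step~\ref{step:root}) immediately makes every root an agent node. For Step~\ref{step:leaf}, I would first observe that the leaf goods of $G(\x,\p)$ are exactly those goods $j$ with $x_{i,j}=1$ for some $i$: market clearing forces $\sum_i x_{i,j}=1$ whenever $j$ has positive price, so $x_{i,j}=1$ is equivalent to $j$ having the single neighbor $i$ in the spending graph. Thus Step~\ref{step:leaf} deletes precisely the leaf goods. Crucially, this cannot expose new leaf goods, because a good's neighbors in the bipartite graph are all agents, so removing another good does not change its degree. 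After Step~\ref{step:leaf} every leaf of $G$ is therefore an agent, and the (agent) roots are unaffected.

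For the inductive step, assume the invariant at the start of an outer-loop iteration and let $i$ be the root agent selected. The iteration deletes $i$ together with some of its child goods and nothing else. The resulting graph is therefore still a forest, and the subtrees that previously made up the tree rooted at $i$ now split into components rooted at the grandchildren of $i$; by bipartiteness these grandchildren are agents, so every new root is an agent. Any vertex not incident to $i$ is unchanged, so previous (agent) leaves remain agent leaves. The only potentially new leaves are the new roots themselves, which are agents, and agents at level $2$ that lose their only parent edge--again agents. No good can newly become a leaf, because a good's neighbors are all agents and the only agent we delete in the iteration is $i$, so any good that was not a child of $i$ retains all of its neighbors; goods that were children of $i$ are themselves deleted.

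The main obstacle, and the only place where the inductive hypothesis is actively used, is verifying that Step~\ref{step:inherit} is well-defined: each remaining child good $j$ of $i$ must have at least one agent child $k$ still in $G$ to inherit it. By the inductive hypothesis $j$ is a good and every leaf of $G$ is an agent, so $j$ is not a leaf; hence $j$ has a neighbor in $G$ other than its parent $i$, and in the rooted tree that neighbor is a child of $j$, which is an agent by bipartiteness. This closes the induction.
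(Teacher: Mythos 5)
Your proof is correct and follows essentially the same route as the paper's: the forest property comes for free from vertex deletion, and the root/leaf properties rest on the same two observations (all child goods of the selected root agent are deleted before the agent itself, and a good's neighbors are all agents, so no other good's degree is affected). The paper casts the root/leaf argument as a proof by contradiction rather than an explicit induction over iterations, and your extra check that Step~\ref{step:inherit} is well-defined is the same ``a good is never a leaf, hence has an agent child'' observation the paper invokes there and again in Lemma~\ref{lem:market-outcome}.
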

\begin{proof}
	The graph $G$ is initialized to be the spending forest, and throughout $\Alg$ we only delete vertices from $G$, without ever adding an edge. Hence, $G$ continues to be a forest.
	
	To establish the property about leaf nodes in $G$, note that in Step~\ref{step:leaf} we assign all the leaves which correspond to goods. Therefore, before the while-loop begins, all leaf nodes correspond to agents. If, for contradiction, we assume that a node $j \in [m]$---which corresponds to a good---becomes a leaf at some point of time, then this must have happened due to the deletion of $j$'s child node $i\in [n]$ (which corresponds to an agent). However, we delete an agent node $i$ only if it has no parent in $G$ (this is exactly the case in which $i$ is considered in the outer while-loop). This contradicts that fact that $\Alg$ would have deleted $i$, implying that a node $j$ (which corresponds to a good) never becomes a leaf in $G$.
	
	Finally, note that at the beginning of $\Alg$ the root nodes correspond to agents: in Step~\ref{step:root} we explicitly root the trees of $G$ at agent nodes. As before, if we assume, for contradiction, that a good node $j \in [m]$ becomes a root at some point of time, then this must have happened due to the deletion of $j$'s parent node $i\in [n]$ (which corresponds to an agent). However, we delete an agent node $i$ only after all of $i$'s child nodes (which includes $j$) have been assigned (see Step~\ref{step:inherit}). Therefore, before $i$'s deletion we would have assigned $j$ to a grandchild of $i$ (who is guaranteed to exist, due to the fact that $j$ is not a leaf node). That is, $\Alg$ would have deleted $j$ (from $G$) before $i$, contradicting the assumption that $j$ ends up being a root node. Hence, the stated claim follows for the root nodes as well.
\end{proof}

\begin{lemma}\label{lem:market-outcome}
For a given market $\M=\langle [n], [m], \V, \e \rangle$ (with additive valuations) and equilibrium $(\x, \p)$, let $\x'$ and $\e'$, respectively, be the allocation and the endowment vector computed by $\Alg$. Then,  $(\x',\p)$ is an integral equilibrium of the market $\M'=\langle [n],[m],\V,\e' \rangle$.
	% and $\| \e' - \e \|_\infty \leq \| \p \|_\infty $.
\end{lemma}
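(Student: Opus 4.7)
The plan is to verify the four defining conditions of an integral market equilibrium for the output $(\x',\p)$ of $\Alg$: integrality of $\x'$, market clearing, budget exhaustion under $\e'$, and the maximum-bang-per-buck (MBB) condition. Two of these are essentially free from the construction. $\Alg$ only ever appends whole goods to bundles, so $\x' \in \{0,1\}^{n \times m}$. Budget exhaustion is also automatic: the final line of the algorithm sets $e'_i := \p(\x'_i)$ so that $\x'_i \cdot \p = e'_i$ for every agent $i$. Invoking Claim~\ref{claim:spending-forest}, I would also assume without loss of generality that the spending graph $G(\x,\p)$ is a forest, which is the forest $G$ the algorithm initializes with.

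For market clearing, it is enough to show that every good $g$ in the initial forest $G$ ends up assigned to exactly one agent in $\x'$ (any good not in $G$ must have $p_g = 0$ by market clearing of $(\x,\p)$, and therefore satisfies the condition vacuously). All initial leaf goods are placed in Step~\ref{step:leaf}. For the remaining non-leaf goods, I would induct on iterations of the outer while loop. When a root agent $i$ is processed, Claim~\ref{claim:invariant} ensures that each child good $j$ of $i$ has at least one child agent $k$ (since goods never become leaves), so Step~\ref{step:inherit} is always well-defined. The inner while loop either assigns $j$ directly to $i$ or Step~\ref{step:inherit} hands $j$ off to such a grandchild $k$; in both cases $j$ is deleted from $G$ only upon being assigned. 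After $i$ is deleted, its grandchildren become the new roots and the induction continues on a strictly smaller forest. A separate, equally short argument shows no good is assigned twice, since a good leaves $G$ precisely at the moment of its single assignment.

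For the MBB condition, the key observation is that every assignment made by $\Alg$ occurs along an edge of the original spending forest $G(\x,\p)$. Concretely, in Step~\ref{step:leaf} the leaf good is given to its unique neighbor; in the inner while loop $g$ is explicitly chosen from the neighborhood of the current root $i$ in $G$; and in Step~\ref{step:inherit} a remaining child good $j$ of $i$ is passed to a child agent $k$ of $j$, which is an edge of $G$. Since $(i,g) \in G(\x,\p)$ means $x_{i,g} > 0$ in the original equilibrium, the MBB property of $(\x,\p)$ forces $g \in \MBB_i$. Taken together, integrality, market clearing, budget exhaustion, and MBB yield that $(\x',\p)$ is an integral equilibrium of $\M' = \langle [n],[m],\V,\e' \rangle$.

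The main obstacle I anticipate is the market-clearing bookkeeping: carefully exploiting Claim~\ref{claim:invariant} to certify that the cascade of new roots created whenever an agent is deleted actually visits every non-leaf good before $G$ empties, while never assigning any good twice. The other three conditions reduce to one-line observations, but this inductive walk down the tree is where the argument has the most moving parts.
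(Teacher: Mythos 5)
Your proposal is correct and follows essentially the same route as the paper: verify the four equilibrium conditions directly, use Claim~\ref{claim:invariant} to guarantee that every good has an agent parent (and a grandchild agent when needed) so that the outer loop assigns all goods exactly once, and observe that every assignment travels along an edge of the original spending forest, so $\x'_i \subseteq \{j : x_{i,j} > 0\} \subseteq \MBB_i$ while budget exhaustion holds by the definition of $\e'$. Your explicit handling of goods absent from the spending graph (which must be priced at zero) is a small point the paper glosses over, but it does not change the argument.
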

\begin{proof}
	We will first show that $\Alg$ ends up allocating every good. For any good $j \in [m]$, consider the iteration in which its parent node $i \in [n]$ is being considered in the outer while-loop, i.e., the loop after which $i$ gets deleted. Note that the parent node $i$ is guaranteed to exist since $j$ is never a root (Claim~\ref{claim:invariant}). Furthermore, the algorithm does not terminate till it deletes all the agent nodes from $G$, hence there necessarily exists a point of time when the agent node $i$ is under consideration.  
	
	By construction, good $j$ either gets assigned to $i$ or to a grandchild $k \in [n]$ of node $i$; Claim~\ref{claim:invariant} ensures that $k$ exists.  Hence, we get that all goods are allocated/deleted from $G$ over the course of the algorithm. Hence, the integral allocation $\x'$ satisfies the market clearing condition.  
	
	By construction, the allocation $\x'$, returned by $\Alg$, is a rounding of the allocation $\x$. In particular, for every agent $i \in [n]$, the set of goods that $i$ spends on in $\x'$ is a subset of the goods that $i$ spends on in $\x$, i.e., $\x'_i \subseteq \{ j \in [m] \mid x_{i,j} >0 \}$. Therefore, analogous to $\x$, in $\x'$ agents spend only on maximum bang-per-buck goods, $\x'_i \subseteq \MBB_i$; note that the prices of the goods remain unchanged. Moreover, the budget vector $\e'$ is chosen to satisfy the budget exhaustion condition. Hence $(\x',\p)$ is an integral equilibrium of the market $\M'$.
\end{proof} 

\begin{lemma}\label{lem:near-by-budget}
For any given market $\M=\langle [n], [m], \V, \e \rangle$ (with additive valuations) and equilibrium $(\x, \p)$, the budget vector $\e'$ computed by $\Alg$ satisfies $\|\e'-\e\|_{\infty}\leq \|\p\|_{\infty}$ and $\sum_{i=1}^n e'_i = \sum_{i=1}^n e_i$.
\end{lemma}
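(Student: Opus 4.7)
The proof splits into two parts: the sum identity $\sum_i e'_i = \sum_i e_i$ and the per-agent bound $|e'_i - e_i| \leq \|\p\|_\infty$.

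For the sum identity I would apply Lemma~\ref{lem:market-outcome}: since $(\x',\p)$ is an equilibrium of $\M'$, every good with $p_g > 0$ is integrally allocated under $\x'$, giving $\sum_i e'_i = \sum_i p(\x'_i) = \sum_g p_g$. The identical calculation for $(\x,\p)$ (via its own budget-exhaustion and market-clearing conditions) yields $\sum_i e_i = \sum_g p_g$, so the two sums agree.

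For the per-agent bound I would fix an agent $i$ and decompose $\x'_i = L_i \sqcup I_i \sqcup A_i$ according to which line of $\Alg$ added each good: $L_i$ consists of the leaf children placed by Step~\ref{step:leaf} (which satisfy $x_{i,j}=1$); $I_i$ consists of goods inherited through Step~\ref{step:inherit}; and $A_i$ consists of the non-leaf children appended during $i$'s inner while-loop. The crucial structural observation is that Step~\ref{step:inherit} forwards a good $j$ only from its tree-parent agent down to an agent tree-child of $j$, so the only good that could ever land in $I_i$ is $i$'s own tree-parent $j_i$. Hence $I_i \subseteq \{j_i\}$, and $I_i = \emptyset$ whenever $i$ is a root. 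Writing $N_i$ for $i$'s non-leaf children (with the convention $x_{i,j_i} := 0$ if $i$ is a root), the budget-exhaustion condition of $(\x,\p)$ yields $e_i = p(L_i) + x_{i,j_i} p_{j_i} + \sum_{j' \in N_i} x_{i,j'} p_{j'}$, while by construction $e'_i = p(L_i) + \ind[I_i = \{j_i\}]\,p_{j_i} + p(A_i)$. A useful byproduct is $p(L_i) \leq e_i$.

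The bound then follows from a three-case split on how $i$'s inner loop terminates. If the loop is skipped because the initial bundle already exceeds $e_i$, then $I_i = \{j_i\}$ is forced (since $p(L_i) \leq e_i$) and $A_i = \emptyset$, so substitution gives $e'_i - e_i = (1 - x_{i,j_i})p_{j_i} - \sum_{j' \in N_i} x_{i,j'} p_{j'} \in (0, p_{j_i}]$. If the loop exits with some $g^* \in N_i \setminus A_i$ still present, the termination condition $p(\x'_i) + p_{g^*} > e_i$ combined with the invariant $p(\x'_i) \leq e_i$ yields $|e'_i - e_i| < p_{g^*}$. If the loop ends with $A_i = N_i$ and $p(\x'_i) \leq e_i$, substitution gives $e_i - e'_i = \bigl(x_{i,j_i}\ind[i\text{ non-root}] - \ind[I_i = \{j_i\}]\bigr)p_{j_i} - \sum_{j' \in N_i}(1 - x_{i,j'})p_{j'}$, which is non-negative and bounded above by $x_{i,j_i}p_{j_i}$. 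Since $p_{j_i}$ and $p_{g^*}$ are each at most $\|\p\|_\infty$, every case yields $|e'_i - e_i| \leq \|\p\|_\infty$. The main obstacle I anticipate is the bookkeeping: establishing $|I_i| \leq 1$ rests on the single-parent structure preserved by Claim~\ref{claim:invariant}, and correctly identifying $A_i$, $N_i$, and the status of the loop invariant across the three termination scenarios requires care. Once these are in place, the arithmetic is straightforward.
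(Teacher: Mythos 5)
Your proof is correct and takes essentially the same route as the paper's: the sum identity follows from budget exhaustion and market clearing under the unchanged prices, and the per-agent bound comes from noting that agent $i$ can gain at most its (single) parent good while any loss is bounded by either the price of a child rejected by the inner-loop guard ($\p(\x'_i)+p_g>e_i$) or the unreceived fraction of the parent good. Your $L_i\sqcup I_i\sqcup A_i$ decomposition and the three-way split on how the inner loop terminates simply make explicit the bookkeeping that the paper carries out via its case analysis on whether $i$ has and receives children.
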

\begin{proof}
	
	In the while-loops of $\Alg$ an agent can receive only contested goods: either the parent good and/or its child goods. Agents that have no children in $G$ (at the beginning of the while loops) or are isolated satisfy the endowment bound directly; such an agent $i$ has at most one contested good, its parent $\widehat{g}$, and we have $ e_i - p_{\widehat{g}} \leq e'_i \leq e_i + p_{\widehat{g}}$. Recall that $\p(\x_i) = e_i$. Hence, to complete the proof we now need to obtain the endowment bounds for agents that have child nodes.
	
	Note that the child nodes (goods) of an agent $i$ are never deleted before $i$. The child goods are allocated/deleted only when agent $i$ is selected in the outer while-loop. If an agent $i$ has children, but it does not receive any of its child nodes, then it must be the case that $i$'s endowment is high enough to not accommodate any child, $g$. Specifically, we have $\p(\x'_i) + p_g > e_i$, i.e., $e'_i \geq e_i - p_g$. Furthermore, in this case, the only good that $i$ may have received during the execution of the while-loops is its parent good, $\widehat{g}$, hence $e'_i \leq e_i + p_{\widehat{g}}$. 
	
	The remainder of the analysis addresses agents who have children and receive at least one of their child nodes (goods). For such agents, the condition of the inner while-loop ensures that we never over allocate child nodes, $e'_i = \p(\x'_i) \leq e_i$. We will establish a lower bound for $e'_i$s by considering different cases based on whether an agent $i \in [n]$ receives all of its child nodes or just some of them. Here, we write $\widehat{g} \in [m]$ to denote the parent good of agent $i$ in $G$. 
	
	\begin{itemize}
		\item If an agent $i$ receives all of its child nodes, then $e'_i = \p(\x'_i) \geq \p(\x_i) - p_{\widehat{g}}$; here, the subtracted term, $p_{\widehat{g}}$, accounts for the fact that $i$ might not have received its parent good $\widehat{g}$. Hence, in this case we have $e'_i \geq e_i - p_{\widehat{g}}$. 
		\item In case agent $i$ does not receive a child good $g$, from the condition in the inner while-loop, we get $\p(\x'_i) + p_g > e_i$. Otherwise, child $g$ would have been included in $\x'_i$. Therefore, $e'_i = \p(\x'_i) \geq e_i - p_g$ and we get a lower bound in this case as well.   
	\end{itemize}
	
	Overall, the endowments satisfy $\| \e' - \e\|_\infty \leq \| \p \|_\infty$. 
	
	Note that $\Alg$ does not modify the prices of the goods. Since both the markets $\M$ and $\M'$ have the same equilibrium prices $\p$, the budget-exhaustion and market-clearing conditions of $\M$ and $\M'$ give us: $\sum_i e'_i = \sum_j p_j = \sum_i e_i$.
\end{proof}

\begin{remark}
	\label{remark:mbbg}
	The proof of Lemma~\ref{lem:near-by-budget} shows that if $e'_i<e_i$ then there exists a good $g\notin\x'_i$ that was fractionally allocated to $i$ under $\x$ (i.e., $x_{i,g}>0$) such that $e_i \leq e'_i + p_g$. Note that for such a good $g$ (via the maximum bang-per-buck condition in the definition of an equilibrium) we have $g \in \MBB_i$. 
	
	The analysis also ensures that if $e'_i > e_i$, then there exists a good $\widehat{g} \in \x'_i \subseteq \MBB_i$ (specifically, the parent of $i$) such that $e'_i \leq e_i + p_{\widehat{g}}$.
\end{remark}

From Proposition~\ref{claim:polytime}, Lemma~\ref{lem:market-outcome}, and Lemma~\ref{lem:near-by-budget}, we directly obtain Theorem~\ref{thm:key-lemma}.

\subsection{An Extension of Theorem~\ref{thm:key-lemma}}

The fact that Theorem~\ref{thm:key-lemma} requires an equilibrium of the given market is not a computational hurdle. The work of Orlin~\cite{orlin2010improved} provides a strongly polynomial-time algorithm for computing an equilibrium $(\x, \p)$ of a given Fisher market $\M$. Hence, Theorem~\ref{thm:key-lemma}, along with the result of Orlin~\cite{orlin2010improved}, leads to the following algorithmic result. 

\begin{theorem}\label{thm:strong-poly-mark}
	Given $m$ goods, $n$ agents with additive valuations, $\V=\{v_1, \ldots, v_n\}$, and a budget vector $\e$. In (strongly) polynomial time, we can find a budget vector $\e'$, an integral allocation $\x'$, and a price vector $\p$ such that:
	\begin{itemize}
		\item $(\x',\p)$ is an integral equilibrium of the (pure) market $\M'=\langle [n],[m],\V,\e' \rangle$.
		\item The budget vector $\e'$ is close to $\e$: $\|\e'-\e\|_{\infty}\leq \|\p\|_{\infty}$ and  $\sum_{i=1}^n e'_i = \sum_{i=1}^n e_i$.
	\end{itemize}
\end{theorem}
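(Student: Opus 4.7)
The plan is to prove Theorem~\ref{thm:strong-poly-mark} as a direct composition of two strongly polynomial-time procedures: first, compute some equilibrium $(\x, \p)$ of the input market $\M = \langle [n], [m], \V, \e \rangle$; second, feed this equilibrium into the rounding algorithm \Alg{} guaranteed by Theorem~\ref{thm:key-lemma} to obtain $\e'$ and $\x'$. Since the prices $\p$ are not perturbed in the second step, the same $\p$ serves as the price vector required in the statement of Theorem~\ref{thm:strong-poly-mark}.

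For the first step, I would invoke the algorithm of Orlin~\cite{orlin2010improved} (or equivalently Vegh~\cite{vegh2012strongly}), which computes an equilibrium $(\x, \p)$ of a Fisher market with additive valuations in strongly polynomial time. This produces prices $\p$ and a (possibly fractional) allocation $\x$ satisfying market clearing, budget exhaustion with respect to the original $\e$, and the maximum bang-per-buck property. Optionally, using Claim~\ref{claim:spending-forest}, I can rearrange the spending so that the spending graph $G(\x,\p)$ is a forest, again in strongly polynomial time; this is what \Alg{} expects.

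For the second step, I would apply \Alg{} to the triple $(\M, \x, \p)$. Theorem~\ref{thm:key-lemma} then immediately yields an integral allocation $\x'$ and a budget vector $\e'$ such that $(\x', \p)$ is an integral equilibrium of $\M' = \langle [n], [m], \V, \e' \rangle$ and $\|\e' - \e\|_\infty \le \|\p\|_\infty$ with $\sum_i e'_i = \sum_i e_i$. Both steps run in strongly polynomial time, so their composition is strongly polynomial as well, and the output is precisely what the theorem asks for.

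Honestly, there is no real obstacle here: the theorem is essentially a packaging statement that combines Theorem~\ref{thm:key-lemma} (which assumes an input equilibrium) with an off-the-shelf equilibrium computation. The only thing to verify is that Orlin's algorithm is indeed strongly polynomial in the input size $(n, m)$ and returns an equilibrium with a rational description suitable for the rounding procedure; both are established in the cited literature, and \Alg{} only uses arithmetic operations and comparisons on the numbers produced (plus linear-time traversals of the spending forest), so strong polynomiality is preserved throughout.
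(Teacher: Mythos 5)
Your proposal is correct and matches the paper's own argument exactly: Theorem~\ref{thm:strong-poly-mark} is obtained by composing Orlin's strongly polynomial-time equilibrium computation~\cite{orlin2010improved} (plus the spending-forest rearrangement of Claim~\ref{claim:spending-forest}) with the rounding guarantee of Theorem~\ref{thm:key-lemma}. No gaps.
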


\section{Pure Markets for Discrete Fair Division}
\label{section:fair-division}

The section addresses the problem of fairly dividing $m$ indivisible goods among a set of $n$ agents with nonnegative, additive valuations $\V =\{ v_1,v_2,\ldots,v_n\}$. We will denote an instance of a fair division problem as a tuple $\I = \langle [n],[m],\V \rangle$.\footnote{We do not have budgets or prices in the fair division setup.} Note that for each agent $i \in [n]$ the valuation for a subset of goods $S \subseteq [m]$ satisfies $v_i(S) = \sum_{j \in S} v_{i,j} $, where $v_{i,j} \in \mathbb{R}_+$ is the value that agent $i$ has for good $j$. 

%As mentioned previously, classic notions of fairness (in particular, envy freeness and proportionality) are not representative in the context of indivisible goods. To address this issue, in recent years compelling analogues of these notions have been proposed for addressing allocation of discrete goods. 

A prominent solution concept in discrete fair division is \emph{envy-freeness up to one good}. Formally, for a fair-division instance $\I= \langle [n],[m],\V \rangle$, an integral allocation $\x = (\x_1, \x_2, \ldots, \x_n) \in \{0,1\}^{n \times m}$ is said to be envy-free up to one good ($\EFone$) iff for every pair of agents $i,k\in [n]$ there exists a good $g\in \x_k$ such that $v_i(\x_i)\geq v_i(\x_k\setminus\{g\})$. 

Strong existential guarantees are known for $\EFone$, even under combinatorial valuations: it is show in~\cite{LMM+04approximately} that as long as the valuations of the agents are monotone an $\EFone$ allocation exists and can be computed efficiently. Caragiannis et al.~\cite{CKM+16unreasonable} prove that, in the case of additive valuations, this notion of fairness is compatible with (Pareto) efficiency, i.e., there exists an allocation which is both $\EFone$ and Pareto optimal ($\PO$). However, polynomial-time algorithms are not known for finding such allocations--prior work~\cite{barman2018finding} provides a pseudopolynomial time algorithm for this problem.  

Along the lines of $\EFone$, a surrogate of proportionality---called \emph{proportionality up to one good}---has also been considered previously~\cite{conitzer2017fair}. Formally, an allocation $\x=(\x_1, \x_2, \ldots, \x_n)$ is said to be proportional up to one good ($\Propone$) iff for every agent $i \in [n]$ there exists a good $g \in [m]$ such that $ v_i(\x_i\cup \{g\}) \geq v_i([m])/n$. Write $\Prop_i$ to denote the proportional share of agent $i$, i.e., $\Prop_i := v_i([m])/n$. 

Under additive valuations, $\EFone$ allocations are also $\Propone$.  Hence, the result of Lipton et al.~\cite{LMM+04approximately} implies that $\Propone$ allocations exist when the valuations are additive. Similarly, via~\cite{CKM+16unreasonable}, we get that if the agents' valuations are additive, then there exists an allocation that is both $\Propone$ and $\PO$.  

We will show that---in contrast to the known pseudopolynomial results for finding $\EFone$ and $\fPO$ allocations~\cite{barman2018finding}---one can compute allocations that are $\Propone$ and $\fPO$ in strongly polynomial time (Corollary~\ref{cor:fair-division-add-goods}).\footnote{Recall that $\fPO$ is a stronger solution concept that $\PO$, since it requires that an allocation is not Pareto dominated by any fraction (and, hence, any integral) allocation. On the other hand, $\PO$ rules out domination solely by integral allocations.} Finding a $\Propone$ and $\PO$ allocation in polynomial time was identified as an open question in~\cite{conitzer2017fair}, and our algorithmic result for this problem highlights the applicability of Theorem~\ref{thm:strong-poly-mark} in the context of fair division of indivisible goods. 

In addition, we prove a similar result for a natural relaxation of $\EFone$, which we call envy-free up to addition of a good in the first bundle and removal of another good from the other bundle ($\EFtwo$). Formally, an integral allocation $\x = (\x_1, \x_2, \ldots, \x_n)$ is said to be $\EFtwo$ iff for every pair of agents $i, k \in [n]$, there exist goods $g_1 \in [m]$ and $g_2 \in \x_k$, such that $v_i(\x_i \cup\{g_1\})\geq v_i(\x_k\setminus\{g_2\})$. Corollary~\ref{cor:ef2-division-add-goods} shows that  an integral allocation, which is both $\EFtwo$ and $\fPO$, can be computed efficiently. 

%This solution concept requires that an agent $i$ is not envious of any other agent $k$, up to the inclusion of one good in $i$'s bundle and the removal of one good from $k$'s bundle.

\begin{corollary}
	\label{cor:fair-division-add-goods}
	Given a fair-division instance with indivisible goods and additive valuations, in strongly polynomial time we can compute an integral allocation $\alloc$ which is both $\Propone$ (fair) and $\fPO$ (efficient). 
\end{corollary}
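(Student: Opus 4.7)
The plan is to reduce the fair-division question to a \emph{CEEI}-style market and then apply Theorem~\ref{thm:strong-poly-mark}. Given the instance $\I=\langle[n],[m],\V\rangle$, I form the Fisher market $\M=\langle[n],[m],\V,\e\rangle$ with unit endowments, $e_i=1$ for every $i\in[n]$. Invoking Theorem~\ref{thm:strong-poly-mark} on $\M$ yields, in strongly polynomial time, a budget vector $\e'$, a price vector $\p$, and an integral allocation $\x'$ such that $(\x',\p)$ is an integral equilibrium of $\M'=\langle[n],[m],\V,\e'\rangle$, with $\|\e'-\e\|_\infty\le\|\p\|_\infty$ and $\sum_i e'_i=\sum_i e_i=n$. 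The output allocation $\x'$ is the candidate $\alloc$.

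Pareto optimality is immediate: by the First Welfare Theorem (Proposition~\ref{prop:FirstWelfareTheorem}), the equilibrium allocation $\x'$ of $\M'$ is $\fPO$, and since $\fPO$ is a property of $\langle[n],[m],\V\rangle$ alone (independent of the budgets), $\alloc=\x'$ is $\fPO$ for $\I$.

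The substantive part is showing $\Propone$. Let $\MBB_i=\max_{j\in[m]}v_{i,j}/p_j$ be the bang-per-buck ratio of agent $i$ in $\M'$. Using market clearing and budget exhaustion in $\M'$, together with $\sum_i e'_i=n$, I obtain $\sum_j p_j=n$. The MBB bound then gives the key inequality
\[
v_i([m])=\sum_{j\in[m]}v_{i,j}\le \MBB_i\sum_{j\in[m]}p_j=n\,\MBB_i,
\qquad\text{hence}\qquad \Prop_i=\frac{v_i([m])}{n}\le \MBB_i.
\]
At the integral equilibrium we have $\x'_i\subseteq\MBB_i$, so $v_i(\x'_i)=\MBB_i\,e'_i$. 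I split into two cases. If $e'_i\ge 1=e_i$, then $v_i(\x'_i)\ge\MBB_i\ge\Prop_i$ already, and $\Propone$ holds for $i$ (with no extra good needed). If $e'_i<1$, Remark~\ref{remark:mbbg} guarantees a good $g\in\MBB_i$ with $g\notin\x'_i$ and $e_i\le e'_i+p_g$; for this $g$,
\[
v_i(\x'_i\cup\{g\})=\MBB_i(e'_i+p_g)\ge \MBB_i\,e_i=\MBB_i\ge \Prop_i,
\]
so augmenting $\x'_i$ with the single good $g$ meets the proportional share. Either way, $\alloc$ is $\Propone$.

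The main obstacle is really just isolating the inequality $\Prop_i\le\MBB_i$, which holds precisely because the initial budgets are equal and thus $\sum_j p_j=n$; once this is in place, Remark~\ref{remark:mbbg} does exactly the bookkeeping needed to absorb the $\|\e'-\e\|_\infty\le\|\p\|_\infty$ perturbation into the ``one extra good'' slack permitted by the $\Propone$ definition. Strong polynomiality follows from Theorem~\ref{thm:strong-poly-mark}.
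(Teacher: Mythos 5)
Your proposal is correct and follows essentially the same route as the paper: construct the unit-budget Fisher market, invoke Theorem~\ref{thm:strong-poly-mark}, get $\fPO$ from the First Welfare Theorem, and prove $\Propone$ by the same two-case analysis on $e'_i$ versus $1$ using the chain $\MBB_i \geq \MBB_i \, \p([m])/n \geq v_i([m])/n = \Prop_i$ together with Remark~\ref{remark:mbbg}. Your explicit isolation of the inequality $\Prop_i \leq \MBB_i$ is just a cleaner packaging of the identical computation in the paper's case analysis.
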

\begin{proof}
	Given a fair-division instance $\I=\langle [n],[m],\V \rangle$, we construct a Fisher market $\M=\langle [n],[m],\V, \e=\vec{1} \rangle$ by setting the endowment of each agent equal to one. Theorem~\ref{thm:strong-poly-mark} shows that in strongly polynomial time we can compute an equilibrium $(\x,\p)$ of the market $\M$ and, then, round $\x$ to an integral allocation $\alloc$ and obtain a budget vector $\e'$ such that 
	%\begin{itemize}
	$(\alloc,\p)$ is a integral equilibrium of the market $\M'=\langle [n],[m],\V,\e' \rangle$ and the budget vector $\e'$ is close to $\e=\vec{1}$; in particular, $\|\e'-\vec{1}\|_{\infty}\leq \|\p\|_{\infty}$. % and $\|\e'\|_1=\|\vec{1}\|_1=n$. <---- WHERE DO WE USE THIS?
	%\end{itemize}
	
	Since $\alloc$ is an equilibrium of the Fisher market $\M'$, via the first welfare theorem (Proposition~\ref{prop:FirstWelfareTheorem}), we know that $\alloc$ is $\fPO$. Next, we will prove that $\alloc$ is $\Propone$ as well.

	The conditions that define an equilibrium ensure that for all agents $i \in [n]$ and goods $g \in \alloc_i$ (i.e., the goods that are allocated to $i$ in $\alloc$) we have $\frac{v_{i,g}}{p_g} = \MBB_i :=\max_{j' \in[m]} \frac{v_{i,j'}}{p_{j'}}$.\footnote{Note that the prices of the goods are the same under the two equilibria $(\x, \p)$ and $(\alloc, \p)$.} The proof of Lemma~\ref{lem:near-by-budget} further provides the guarantee that if $e'_i < e_i$, then there exists a good $g \in \MBB_i$ such that $e'_i \geq e_i - p_g$ (Remark~\ref{remark:mbbg}). Using these facts we will perform a case analysis to show that allocation $\alloc$ satisfies the stated fairness guarantee: 
	
	\begin{itemize}
		\item If $\p(\alloc_i) = e'_i < e_i = 1$, then there exists a good $g \in \MBB_i$ such that $\p(\alloc_i \cup\{g\})\geq 1$. Therefore,   
		\begin{align*}
		v_i(\alloc_i \cup\{g\}) & = \MBB_i  \  \p(\alloc_i \cup\{g\})   \tag{$v_i$ is additive and $\alloc_i \subseteq \MBB_i$} \\
		& \geq \MBB_i \cdot 1  \tag{$\p(\alloc_i \cup\{g\})\geq 1$} \\
		& = \MBB_i \cdot \p([m])/n \tag{$\p([m]) = \sum_i e_i = n$} \\
		& \geq v_i([m])/n \tag{$\MBB_i \ p_j \geq v_{i,j}$ for all goods $j$} \\
		& = \Prop_i 
		\end{align*}
		\item If $\p(\alloc_i) = e'_i \geq e_i = 1$, then
		\begin{align*}
		v_i(\alloc_i) & = \MBB_i \ \p(\alloc_i) \tag{$\alloc_i \subseteq \MBB_i$} \\
		& \geq \MBB_i \cdot 1 \\ &  = \MBB_i \ \p([m])/n \\
		& \geq v_i([m])/n \\ & = \Prop_i.
		\end{align*}
	\end{itemize}
	
	Overall, we get that for any fair-division instance $\I$, a $\Propone$ and $\fPO$ allocation can be computed in strongly polynomial time. 
\end{proof}

Next, we provide a strongly polynomial-time algorithm for finding integral allocations that are simultaneously $\EFtwo$ and $\fPO$.   

\begin{corollary}\label{cor:ef2-division-add-goods}
	Given a fair-division instance with indivisible goods and additive valuations, in strongly polynomial time we can compute an integral allocation $\alloc$ which is both $\EFtwo$ and $\fPO$. 
\end{corollary}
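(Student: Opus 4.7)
My plan is to mirror the proof of Corollary~\ref{cor:fair-division-add-goods} and leverage Theorem~\ref{thm:strong-poly-mark}. Starting from the fair-division instance $\I=\langle [n],[m],\V\rangle$, I would form the Fisher market $\M=\langle [n],[m],\V,\vec{1}\rangle$ with unit endowments and invoke Theorem~\ref{thm:strong-poly-mark} to obtain, in strongly polynomial time, an integral equilibrium $(\alloc,\p)$ of a pure market $\M'=\langle [n],[m],\V,\e'\rangle$ with $\|\e'-\vec{1}\|_\infty\le \|\p\|_\infty$. The first welfare theorem (Proposition~\ref{prop:FirstWelfareTheorem}) gives $\fPO$ of $\alloc$ for free, so the entire task reduces to verifying that $\alloc$ is $\EFtwo$.

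The key reduction is to compare valuations through prices using the equilibrium bang-per-buck identity. Because $\alloc_i\subseteq \MBB_i$, for any good $h\in \MBB_i$ we have the identity $v_i(\alloc_i\cup\{h\})=\MBB_i\cdot\p(\alloc_i\cup\{h\})$, while for any set $S$ the MBB inequality gives $v_i(S)\le \MBB_i\cdot\p(S)$. Hence, for each ordered pair $(i,k)$ with $\alloc_k\neq\emptyset$, it is enough to exhibit a good $g_1\in \MBB_i$ and a good $g_2\in\alloc_k$ satisfying the common scaled bound
\[
\p(\alloc_i\cup\{g_1\})\ \geq\ 1\ \geq\ \p(\alloc_k\setminus\{g_2\}),
\]
since multiplying through by $\MBB_i$ promotes this price chain to $v_i(\alloc_i\cup\{g_1\})\ge v_i(\alloc_k\setminus\{g_2\})$, which is exactly $\EFtwo$. (If $\alloc_k=\emptyset$, the $\EFtwo$ condition is trivially satisfied.)

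To build the two witnesses independently, I would split on whether $e'_i$ and $e'_k$ lie above or below the original budget $1$. If $e'_i\ge 1$, any $g_1\in \MBB_i$ satisfies $\p(\alloc_i\cup\{g_1\})\ge e'_i\ge 1$; if instead $e'_i<1$, Remark~\ref{remark:mbbg} supplies a good $g_1\in \MBB_i$ with $g_1\notin\alloc_i$ and $\p(\alloc_i\cup\{g_1\})=e'_i+p_{g_1}\ge 1$. Symmetrically, if $e'_k\le 1$, any $g_2\in\alloc_k$ gives $\p(\alloc_k\setminus\{g_2\})\le e'_k\le 1$; and if $e'_k>1$, Remark~\ref{remark:mbbg} (applied to agent $k$) provides a good $\widehat g\in\alloc_k$ with $e'_k\le 1+p_{\widehat g}$, so that setting $g_2=\widehat g$ forces $\p(\alloc_k\setminus\{g_2\})\le 1$. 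Combining the witness for $i$ with the witness for $k$ produces the price chain in every combination of cases.

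The only subtlety---and in that sense the main obstacle---is ensuring that the ``budget-up'' witness $g_1$ used when $e'_i<1$ actually lies in $\MBB_i$, because it is this membership that promotes the price inequality into a valuation inequality through the additive-MBB identity. This MBB-membership is precisely what Remark~\ref{remark:mbbg} was extracted from $\Alg$ to deliver, so once the case split is laid out the proof should be short and essentially parallel to that of Corollary~\ref{cor:fair-division-add-goods}, with the only change being the two-sided flexibility ($g_1$ added to $\alloc_i$, $g_2$ removed from $\alloc_k$) that $\EFtwo$ affords.
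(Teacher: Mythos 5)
Your proposal is correct and follows essentially the same route as the paper's proof: reduce to the price chain $\p(\alloc_i\cup\{g_1\})\geq 1\geq \p(\alloc_k\setminus\{g_2\})$, obtain the two witnesses via the case split on $e'_i,e'_k$ versus $1$ using Remark~\ref{remark:mbbg}, and promote the price inequalities to valuation inequalities through the MBB identity. The only (harmless) addition is your explicit handling of the $\alloc_k=\emptyset$ edge case, which the paper leaves implicit.
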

\begin{proof}
	Given a fair-division instance $\I=\langle [n],[m],\V \rangle$, we construct a Fisher market $\M=\langle [n],[m],\V,\e=\vec{1} \rangle$ by setting the endowment of each agent equal to one. Theorem~\ref{thm:strong-poly-mark} shows that in strongly polynomial time we can compute an equilibrium $(\x,\p)$ of the market $\M$ and, then, round $\x$ to an integral allocation $\alloc$ and obtain a budget vector $\e'$ such that 
	%\begin{itemize}
	$(\alloc,\p)$ is a integral equilibrium of the market $\M'=\langle [n],[m],\V,\e' \rangle$ and the budget vector $\e'$ is close to $\e=\vec{1}$; in particular, $\|\e'-\vec{1}\|_{\infty}\leq \|\p\|_{\infty}$. % and $\|\e'\|_1=\|\vec{1}\|_1=n$. <---- WHERE DO WE USE THIS?
	%\end{itemize}
	
	As noted in Remark~\ref{remark:mbbg}, in this construction, for each agent $i \in [n]$ we have $| e'_i - e_i | \leq p_g$ where $g$ is in fact a good that is fractionally allocated to $i$ under $\x$, i.e., $x_{i,g} >0$. Therefore, the following two properties hold 
	\begin{itemize}
		\item[{\rm P1}:] For each agent $i \in [n]$, there exists a good $g_1 \in \MBB_1$ such that $\p(\alloc_i \cup \{ g_1 \}) \geq 1$. 
		
		If $\p(\alloc_i) = e'_i < 1$,\footnote{By construction, $e_i = 1$.} then this inequality follows from the first part of Remark~\ref{remark:mbbg}. Otherwise, if  $\p(\alloc_i) = e'_i \geq 1$, then 
		the inequity holds trivially--the prices are nonnegative. 
		\item[{\rm P2}:] For each agent $k \in [n]$, there exists a good $g_2 \in \alloc_k \subseteq \MBB_k$ such that $\p(\alloc_k \setminus \{g_2\}) \leq 1$. 
		
		If $\p(\alloc_k) = e'_k > 1$, then (as stated in the second part of Remark~\ref{remark:mbbg}) we have a good $g_2 \in \alloc_k \subseteq \MBB_k$ such that $\p(\alloc_k \setminus \{g_2\}) \leq 1$. For the complementary case, $\p(\alloc_k)=e'_k \leq 1$, this inequality directly holds. 
	\end{itemize}
	
	Properties {\rm P1} and {\rm P2} imply that allocation $\alloc$ is $\EFtwo$ (here, for any two agents $i$ and $k$ we select goods $g_1$ and $g_2$ as specified in the two properties, respectively): 
	\begin{align*}
	v_i(\alloc_i \cup \{g_1\}) & = \MBB_i \ \p(\alloc_i \cup \{g_1 \}) \tag{$\alloc_i \subseteq \MBB_i$ and $g_1 \in \MBB_i$} \\
	& \geq \MBB_i \cdot 1 \tag{{\rm P1}} \\
	& \geq \MBB_i \ \p( \alloc_k \setminus \{ g_2\}) \tag{{\rm P2}} \\
	& \geq v_i (\alloc_k \setminus \{ g_2\}) \tag{$\MBB_i \ p_j \geq v_{i,j}$ for all goods $j$} 
	\end{align*}

\end{proof}

\section{Some Empirical Results}
\label{section:empirical}
For an experimental analysis of $\Alg$, we generate random instances of Fisher markets with equal incomes ($\e=\vec{1}$) and number of agents  $n\in\{2,4,8,16,32,64\}$. For each $n$, the number of goods are kept to be five times the number of agents ($m= 5 n$) and we run the experiment $100$ times. Agents' valuations for the goods are selected uniformly at random from the set $S=\{2^{2^{k-1}} \mid k\in[10]\}$ (i.e., for any agent $i\in[n]$ and any good $j\in[m]$ we pick $v_{i,j}$ uniformly at random from the set $S$). Generating the valuations this way helps avoid convergence issues while solving the Eisenberg-Gale convex program.

%In this section, we report on some experimental work done to test the performance of $\Alg$. We run our experiments on random instances of Fisher markets with equal incomes (i.e., $\e=\vec{1}$), while keeping number of goods to be five times the number of agents (i.e., $m=5n$). 

Given a Fisher market $\M=\langle [n],[5n],\V,\vec{1} \rangle$, we compute its equilibrium allocation ($\x$) using projected gradient ascent on the corresponding Eisenberg-Gale convex program.\footnote{Recall that the optimal solutions of the Eisenberg-Gale convex program correspond to   equilibrium allocations of the underlying Fisher market~\cite{EG59consensus}.} In addition, we find an equilibrium price vector ($\p$) using the  equilibrium conditions. Then, we update $\x$ using Algorithm~\ref{alg:rearrange-spending} (Appendix~\ref{app:spending-forest}) to ensure that that its spending graph is a forest and, finally, execute $\Alg$  on  the input $(\M,(\x,\p))$. Note that while there are sophisticated algorithms to compute exact equilibrium of Fisher markets in strongly polynomial time~\cite{orlin2010improved, vegh2012strongly}, we use the projected gradient ascent for ease of implementation and convergence speed.  

%For each value of $n\in\{2,4,8,16,32,64\}$, we ran the experiment 100 times while generating agents' values for goods uniformly at random from the set $S=\{2^{2^{k-1}}|k\in[10]\}$ (i.e. for any agent~$i\in[n]$ and any good~$j\in[m]$, $v_{i,j}$ is picked uniformly at random from the set S). Generating agents' values uniformly from $S$ helps avoid convergence issues while running projected gradient ascent.

Our empirical results appear in Table~\ref{table:experiments}. As established in Corollary~\ref{cor:fair-division-add-goods} and Corollary~\ref{cor:ef2-division-add-goods}, the above procedure always finds an allocation which is $\Propone$ and $\EFtwo$. In fact, for about 96\% of the (randomly generated) instances, the implemented method finds an envy-free allocation. This suggests that, in practice, our algorithms outperform our theoretical guarantees. In addition, we find that it takes notably less time to execute the rounding method than to compute a market equilibrium (i.e., solve the Eisenberg-Gale program). 

\begin{table*}[h] 
	\caption{Empirical Results}
	\label{table:experiments}
	\footnotesize
	\begin{tabular}{|l|l|l|l|l|l|l|}
		\hline
		\textbf{Number of agents (n)} & $n=2$ & $n=4$ & $n=8$ & $n=16$ & $n=32$ & $n=64$\\
		\textbf{Number of goods (m)} & $m=10$ & $m=20$ & $m=40$ & $m=80$ & $m=160$ & $m=320$\\
		\hline
		\hline
		\textbf{Mean run-time of Gradient Ascent} & 1.104 sec & 1.621 sec & 2.067 sec & 2.869 sec & 5.593 sec & 6.559 sec\\
		\textbf{Mean run-time of Algorithm~\ref{alg:rearrange-spending}} & 0.0007 sec & 0.005 sec & 0.020 sec & 0.067 sec & 0.198 sec & 1.033 sec\\
		\textbf{Mean run-time of $\Alg$} & 0.0002 sec & 0.0005 sec & 0.0007 sec & 0.002 sec & 0.007 sec & 0.025 sec\\
		\hline
		\hline
		\textbf{Max run-time of Gradient Ascent} & 1.747 sec & 3.897 sec & 4.155 sec & 10.006 sec & 29.11 sec & 7.788 sec\\
		\textbf{Max run-time of Algorithm~\ref{alg:rearrange-spending}} & 0.001 sec & 0.011 sec & 0.050 sec & 0.109 sec & 0.299 sec & 1.329 sec\\
		\textbf{Max run-time of $\Alg$} & 0.001 sec & 0.005 sec & 0.002 sec & 0.004 sec & 0.012 sec & 0.038 sec\\
		\hline
		\hline
		\textbf{Number of $\EF$ allocations (out of 100)} & 99 & 86 & 95 & 99 & 98 & 100\\
		\textbf{Number of $\EFone$ allocations (out of 100)} & 100 & 86 & 95 & 99 & 98 & 100\\
		\textbf{Number of $\EFtwo$ allocations (out of 100)} & 100 & 100 & 100 & 100 & 100 & 100\\
		\textbf{Number of $\Prop$ allocations (out of 100)} & 99 & 86 & 96 & 100 & 100 & 100\\
		\textbf{Number of $\Propone$ allocations (out of 100)} & 100 & 100 & 100 & 100 & 100 & 100\\
		\hline
	\end{tabular}
\end{table*}

\bibliographystyle{alpha}
\bibliography{Bibliography-File}

\appendix

\section{Proof of Claim~\ref{claim:spending-forest}}
\label{app:spending-forest}

\renewcommand{\floatpagefraction}{.8}%Ensures that algorithm is the only float on a page, with no other text
%\RestyleAlgo{boxruled}%Draws a box around the algo environment
\begin{algorithm}%[H]
	\DontPrintSemicolon
	%\SetAlgoNoLine
	%\SetKwInOut{Input}{input}\SetKwInOut{Output}{output}
	\KwIn{A Fisher market $\M$ and its equilibrium $(\x,\p)$.}
	\KwOut{An equilibrium $(\x',\p)$ of $\M$ with the property that $G(\x',\p)$ is a forest.}
	% \Parameter{$0 < \varepsilon < 1$.}
	\BlankLine
	$\x'\leftarrow\x$.\;
	\While{there is a cycle in $G(\x',\p)$.}{
		Let $G\leftarrow G(\x',\p)$ and for any edge $(i,j)$ let $w_{i,j} :=x'_{i,j} p_j$ denote its weight in $G$.\;
		Find a cycle $C$ in $G$.\;
		Find a least weight edge on the cycle $C$ and let $w$ be the weight of this edge.\;
		\tcc{i.e., we pick an edge from the set $\arg\min_{(i,j)\in C}x'_{i,j}p_j$.}
		In the graph $G$, alternatingly add and subtract the weight $w$ from the edges of the cycle $C$ so that the least weight edge gets deleted.\;
		For all $(i,j)\in C$, $x'_{i,j}\leftarrow w_{i,j}/p_j$.
	}
	%	\While{$G(\x',\p)$ has a cycle.}{
	%	Find a cycle $i_1,j_1,i_2,j_2,\dots,i_k,j_k,i_1$\;
	%	}
	%\BlankLine
	\BlankLine
	\BlankLine
	\caption{Procedure to rearrange spending so that spending graph is a forest.}
	\label{alg:rearrange-spending}
\end{algorithm}

In this section we will show that Algorithm~\ref{alg:rearrange-spending} finds, in  strongly polynomial time, an allocation $\x'$ that satisfies Claim~\ref{claim:spending-forest}. Algorithm~\ref{alg:rearrange-spending} initializes $\x'$ to be the input allocation $\x$ and keeps iteratively modifying $\x'$. In every iteration of the while-loop, an edge which was part of a cycle $C$ of $G(\x',\p)$ gets deleted and, hence, every iteration deletes a cycle from the spending graph. Throughout these modifications, we maintain the invariant that $(\x',\p)$ is an equilibrium of the given Fisher market $\M$. These observations establish the stated claim and are detailed below. \\

\noindent
\textbf{Proof of Correctness:} Since in every iteration the graph $G = G(\x',\p) $ considered by the algorithm is bipartite, the selected cycles are always of even length. Hence, as we alternately add and subtract the least weight though a cycle, no agents total spending ever changes. Specifically, consider a cycle $i_1j_1i_2j_2\dots i_k j_k i_1$ in $G$. Without loss of generality, we can assume that $(i_1,j_1)$ is the least weight edge in the spending graph and let $w$ be the weight associated with this edge. We will delete the edge $(i_1,j_1)$ by subtracting the weight $w$ from it and alternately add and subtract $w$ throughout the cycle. Hence for any agent (say agent~$i_\ell$), we increase $i_\ell$'s spending on $j_{\ell-1}$ by $w$ and decrease $i_\ell$'s spending on $j_{(\ell+1) \mod k))}$ by $w$. Hence the total amount spent by any agent does not change. Moreover, as $w$ is the least weight of any edge in the cycle, adding or subtracting any agents spending on any good in the cycle by $w$ will maintain the non-negativity of all the spendings. Therefore, throughout the execution of the algorithm, the budget-exhaustion condition is maintained. 

For any good $j_\ell$ on the cycle, the consumption by agent $i_\ell$ decreases by $w/p_{j_\ell}$ and the consumption of this good by the agent $i_{(\ell+1) \mod k))}$ goes up by $w/p_{j_\ell}$. Therefore, the market-clearing conditions are maintained as well. Finally, the maximum bang-per-buck condition is also maintained. This follows from the fact that in $\x$ agents only spent on goods which provided them maximum bang-per-buck and in $\x'$ agents only spend on a subset of goods that they originally spent on in $\x$. 

These observations imply that Algorithm~\ref{alg:rearrange-spending} maintains the invariant that $(\x',\p)$ is an equilibrium of $\M$. From the condition in the while-loop, it is clear that the algorithm terminates if and only if $G(\x',\p)$ is a forest. We will now complete the proof of the claim by showing that Algorithm~\ref{alg:rearrange-spending} terminates in $\poly(n,m)$ time.\\

\noindent
\textbf{Run-Time Analysis:} In each iteration of the algorithm we delete one edge from $G(\x',\p)$ and never add a new edge to the graph. Therefore, the algorithm iterates at most $nm$ times. Furthermore, each iteration runs in strongly polynomial time, since it entails finding a cycle and a minimum weight edge on it. Therefore,  Algorithm~\ref{alg:rearrange-spending} terminates in strongly polynomial time.
This completes the proof. 

\section{Hardness of Finding Integral Equilibria}
\label{appendix:hardness}

\begin{theorem}
	\label{theorem:pure-hardness}
	It is {\rm NP}-hard to determine whether a given Fisher market admits an integral equilibrium or not.    
\end{theorem}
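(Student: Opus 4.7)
The plan is to reduce from \Partition: given positive integers $a_1,\dots,a_m$ with $\sum_j a_j = 2T$, decide whether there is a subset $S\subseteq[m]$ with $\sum_{j\in S} a_j = T$. From such an instance I will construct a Fisher market $\M$ with $n=2$ agents and $m$ goods, where both agents have the additive valuation $v_{i,j}:=a_j$ (for $i\in\{1,2\}$ and $j\in[m]$) and identical budgets $e_1=e_2=T$. The reduction is clearly polynomial-time; I will then argue that $\M$ is pure if and only if the given \Partition{} instance is a YES-instance.

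For the easy direction, if $S\subseteq[m]$ satisfies $\sum_{j\in S} a_j = T$, set prices $p_j := a_j$ and let $\x$ be the integral allocation giving exactly the goods in $S$ to agent~$1$ and the goods in $[m]\setminus S$ to agent~$2$. Each agent's spending is $T$ (budget exhaustion), every good is fully allocated (market clearing), and since $v_{i,j}/p_j = 1$ for every $i,j$, every good lies in $\MBB_i$ (maximum bang-per-buck). Hence $(\x,\p)$ is an integral equilibrium and $\M$ is pure.

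For the converse, suppose $(\x,\p)$ is an integral equilibrium of $\M$. The key structural observation is that, because the two agents have identical valuations, $\MBB_1$ and $\MBB_2$ coincide as both sets and as numerical ratios; call this common ratio $\alpha$. I first rule out zero prices: if some good $j$ with $a_j>0$ had $p_j=0$, then $v_{i,j}/p_j=\infty$ would be the unique bang-per-buck optimum, so any agent could only spend on goods of price zero, contradicting budget exhaustion (recall $e_i=T>0$). Thus $p_j>0$ for every $j$, and the maximum bang-per-buck condition forces $p_j = a_j/\alpha$ on every good actually allocated; moreover by market clearing every good is allocated (any unallocated good would also need $p_j=0$, again ruled out). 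Summing over all goods gives $\sum_j p_j = 2T/\alpha$, while summing budget exhaustion over agents yields $\sum_j p_j = e_1+e_2 = 2T$, so $\alpha=1$ and $p_j=a_j$ for every $j$. Letting $S:=\x_1$ be the integral bundle allocated to agent~$1$, budget exhaustion now reads $\sum_{j\in S} a_j = T$, so $S$ witnesses a YES-answer for \Partition.

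The main (and only) subtle step is the pinning-down of prices in the converse: I need that the equilibrium leaves no freedom to rescale or zero-out prices in a way that decouples them from the values $a_j$. This is handled by the identical-valuation symmetry together with the positivity argument above, which together force $p_j=a_j$ on the nose. Combining the two directions, $\M$ is pure iff the \Partition{} instance is a YES-instance, which establishes \NP-hardness of deciding whether a given Fisher market admits an integral equilibrium.
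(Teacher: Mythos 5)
Your proposal is correct and follows essentially the same route as the paper: a reduction from \Partition{} using two agents with identical additive valuations equal to the given integers and equal budgets, with the equilibrium conditions pinning the prices to $p_j=a_j$ so that budget exhaustion yields the balanced partition. If anything, your argument is slightly more careful than the paper's in explicitly ruling out zero prices and unallocated goods before concluding $\alpha=1$.
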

\begin{proof}
	
	We establish the hardness of determining whether a market is pure by reducing the partition problem to it. Recall that in the partition problem, we are given a set $S=\{s_1,s_2,\ldots, s_m\}$ of positive integers and the goal is to find a $2$-partition $(S_1,S_2)$ of $S$ such that the sum of the numbers in $S_1$ is equal to the sum of number in $S_2$. Given an instance of the partition problem with $m$ positive integers, we will construct a market with two agents and $m$ goods, $\M=\langle [2], [m], \V, \e\rangle$. Here, both agents have equal budget, $e_1=e_2=\frac{1}{2}\sum_{s\in S}s$, and identical, additive valuation $v_{1,j}=v_{2,j}=s_j$ for all $j\in[m]$.
	
	Note that $(\x,\p)$ is an integral equilibrium of $\M$ iff:
	\begin{itemize}
		\item Market clearing: $\x=(\x_1,\x_2)$ is a partition of $[m]$; in particular, for all $j\in[m]$ we have  $j \in \x_1\cup\x_2$.
		\item Maximum bang-per-buck allocation: for both the agents  $i \in \{1,2\}$ and for each good $j \in \x_i$ the $\MBB$ condition implies that  $\frac{v_{i,j}}{p_j}=\max_{j'\in[m]}\frac{v_{i,j'}}{p_j'}$. Since both agents have the same valuation, we have $\frac{s_j}{p_j}=\frac{s_{j'}}{p_{j'}}$ for all $j.j' \in [m]$.
		\item Budgets exhaustion: $\sum_{j\in\x_i}p_j=e_i$ for all $i\in\{1,2\}$. Therefore, $\sum_j p_j = e_1+e_2 = \sum_{s\in S}s$. This along with the fact that $\frac{s_j}{p_j}=\frac{s_{j'}}{p_{j'}}$ for all $j,j'\in[m]$ implies that $\p=(s_1,s_2,\dots,s_m)$.
	\end{itemize}
	This implies that $(\x,\p)$ is a market outcome of $\M$ iff $\p=(s_1,s_2,\dots,s_n)$  and \begin{align*} \sum_{j\in \x_1}s_j=\sum_{j\in \x_1}p_j=\e_1=\e_2=\sum_{j\in \x_1}p_j=\sum_{j\in \x_2}s_j.\end{align*}
	
	Hence, there exists a integral equilibrium for $\M$ iff there exists a $2$-partition $(S_1,S_2)$ of $S$ such that the sum of the numbers in $S_1$ is equal to the sum of the numbers in $S_2$. This establishes the stated claim. 
\end{proof}

\section{Comparative Example}
\label{appendix:example}

This section provides an example of a Fisher market wherein $\Alg$ outperforms (in terms of budget perturbations) the algorithm developed in~\cite{barman2018finding}.

Consider a market that consists of $4n-1$ goods and $2n$ agents, each with a budget of $1$. The first $n$ agents value the first $2n$ goods uniformly at $n$. In addition, the first $n$ agents have a value of zero for the last $2n-1$ goods. The last $n$ agents value the first $2n$ goods uniformly at $(1-\varepsilon)$ and their value for each of the last $2n-1$ goods is equal to one. 

At equilibrium, each of the first $2n$ goods will be priced at $1/2$. In addition, the equilibrium prices of last $2n-1$ goods will be $n/(2n-1)$, each.

Therefore, via Theorem~\ref{thm:strong-poly-mark}, we can find a pure market by perturbing the budgets no more than $n/(2n-1) \approx 1/2$. Next, we will show that the pure market obtained via the algorithm in~\cite{barman2018finding} leads to a budget perturbation of $\approx 3/4$.

The algorithm of Barman et al.~\cite{barman2018finding} would start with a welfare-maximizing allocation, i.e., it would start by allocating (i) the first $2n$ goods among the first $n$ agents and (ii) the last $2n-1$ goods among the last $n$ agents. Note that, under this allocation, one of  the last $n$ agents gets less than two goods. In~\cite{barman2018finding} the prices are initialized to be equal to the valuations; one can normalize them after the termination of the algorithm to ensure that the sum of prices is equal to the sum of the budgets, i.e.,  equal to $n$. 

Since this initial allocation is not \emph{price envy-free up to one good}, the algorithm of Barman et al.~\cite{barman2018finding} would scale the prices up and, in particular, increase the prices of the last $2n-1$ goods to $n$ each.  At this point of time, price envy-freeness up to one good is achieved and the algorithm would terminate. Overall, the method in~\cite{barman2018finding} will find a solution in which every good is priced at $n$ and there exists an agent who receives exactly one good; the remaining agents will obtain two goods, each.

Note that, at this point, however, the sum of prices is $T := 2n \cdot n + (2n-1) \cdot n = 4n^2 - n$. To get the sum of prices back to $n$ we scale them down by $T/n$. Hence, the budget of the agent with a single good scales down to $n \ \frac{n}{T} = \frac{n^2}{4n^2 - n} \approx \frac{1}{4}$. Therefore, the change in the budget of this agent is about $ 1 - 1/4 = 3/4$. As mentioned previously, the algorithm developed in the present paper would lead to a budget perturbation of close to $1/2$ and, hence, will perform better on this instance. 

\end{document}